\documentclass{article}

\usepackage{PRIMEarxiv}

\usepackage[utf8]{inputenc} 
\usepackage[T1]{fontenc}    
\usepackage{hyperref}       
\usepackage{url}            
\usepackage{booktabs}       
\usepackage{amsfonts}       
\usepackage{nicefrac}       
\usepackage{microtype}      
\usepackage{lipsum}
\usepackage{fancyhdr}       
\usepackage{graphicx}       
\graphicspath{{media/}}     
\usepackage[T1]{fontenc}
\usepackage[utf8]{inputenc}
\usepackage{graphicx}
\usepackage{amsmath}
\usepackage{xspace}
\usepackage{booktabs}
\usepackage{algorithm}
\usepackage{algorithmic}
\usepackage{latexsym}
\usepackage{xargs}
\usepackage{xcolor}
\usepackage{amssymb}
\usepackage{todonotes}
\usepackage{makecell}
\usepackage{amsthm}
\usepackage{mathtools}
\usepackage{etoolbox}
\usepackage{tikz}

\usepackage{dsfont}
\usepackage{soul}
\usepackage{url}
\usepackage{multirow}
\usepackage{cleveref}
\usepackage{amsmath}
\DeclareMathOperator*{\argmax}{arg\,max}

\DeclareFontFamily{U}{mathx}{\hyphenchar\font45}
\DeclareFontShape{U}{mathx}{m}{n}{
      <5> <6> <7> <8> <9> <10>
      <10.95> <12> <14.4> <17.28> <20.74> <24.88>
      mathx10
      }{}
\DeclareSymbolFont{mathx}{U}{mathx}{m}{n}
\DeclareFontSubstitution{U}{mathx}{m}{n}
\DeclareMathAccent{\widebar}{0}{mathx}{"73}
\newcommand{\Oh}{\ensuremath{\mathcal{O}}}

\usepackage{xspace}
\newcommand{\reallocEFone}{\textsc{Good-Bounded EF1-Reformability}\xspace}
\newcommand{\realloc}{\textsc{Good-Bounded $\beta$-Reformability}\xspace}
\newcommand{\reallocEFX}{\textsc{Good-Bounded EFX-Reformability}\xspace}
\newcommand{\reallocEF}{\textsc{Good-Bounded EF-Reformability}\xspace}

\newcommand{\defproblem}[3]{%
  \begin{center}
  \fbox{\begin{minipage}{0.97\linewidth}
    \textbf{#1}\par\smallskip
    \textbf{Input:} #2\par
    \textbf{Question:} #3
  \end{minipage}}%
  \end{center}
}

\usetikzlibrary{arrows.meta}

\newtheorem{theorem}{Theorem}[section]
\newtheorem{lemma}[theorem]{Lemma}

\newtheorem{observation}[theorem]{Observation}

\crefname{theorem}{Theorem}{Theorems}
\Crefname{theorem}{Thm}{Thms}
\crefname{lemma}{Lemma}{Lemmas}
\Crefname{lemma}{Lem}{Lems}

\title{Reaching Fairness by Reallocating Goods}

\author{
Robert Bredereck\textsuperscript{1} \quad
Eva Deltl\textsuperscript{1} \quad
Tanmay Inamdar\textsuperscript{2} \quad
Pallavi Jain\textsuperscript{2} \quad
Pranjal Pandey\textsuperscript{2} \quad
Bin Sun\textsuperscript{1}
\\[0.7em]
\textsuperscript{1} Institute of Computer Science, TU Clausthal,
Clausthal-Zellerfeld, Germany
\\
\textsuperscript{2} Department of Computer Science and Engineering,
IIT Jodhpur, Jodhpur, Rajasthan, India
\\[0.5em]
\texttt{\{robert.bredereck,eva.deltl,bin.sun\}@tu-clausthal.de}
\\
\texttt{taninamdar@gmail.com, \{pallavi,d23csa005\}@iitj.ac.in}
}

\begin{document}
\maketitle

\begin{abstract}
Fair allocation of indivisible goods has largely been studied under the
assumption that no prior allocation exists.  Motivated by practical
settings with pre-existing (and possibly unfair) allocations, 
we study how to achieve fairness through limited
reallocations. Building on recent work on
reformability/reallocations, we consider three fairness notions---envy-freeness (EF),
envy-freeness up to one good (EF1), and envy-freeness up to any good
(EFX)---and optimize 
the number of goods
reallocated. 
We analyze
both the classical and parameterized complexity of these problems,
providing a comprehensive analysis across multiple fairness notions. 

\end{abstract}

\keywords{Fair division \and Envy-freeness \and Reallocation \and
Parameterized complexity}

\section{Introduction}

Fair allocation of indivisible goods is a well-studied problem in
Economics, Mathematics, and Computer Science~\cite{DBLP:journals/ai/AmanatidisABFLMVW23,DBLP:journals/sigecom/AzizLMW22,brandt2016handbook,guo2023survey,DBLP:journals/jair/LiuLSW24,maskin1987fair}.  
Initiated by the work of
Steinhaus~\cite{steinhaus1948problem}, several notions of fairness have
been proposed, including proportionality, envy-freeness (most popular),
the maximin share, and Nash social welfare~\cite{DBLP:journals/ai/AmanatidisABFLMVW23}.  Most of the prior work
assumes that goods are initially unallocated.  
However, in many practical settings, an allocation is already in place
before fairness is evaluated, and it may fail to satisfy the relevant fairness requirements. This can occur when preferences change or become known only after an initial assignment, when agents or goods are added or removed, or when the fairness criterion itself becomes relevant only after the
allocation has been implemented. Examples include computational resources
assigned to users, students assigned to courses or rooms, and employees
assigned to tasks or projects. In such settings, one may want to restore
or achieve fairness while changing as few assignments as possible.  Moulin also discusses fair division arising from inheritance disputes, divorce settlements, etc., where resources often already have an incumbent allocation and fairness becomes relevant afterward~\cite{DBLP:books/daglib/0017734}. More broadly, these settings have motivated work on control in fair division and on repairing allocations through addition or deletion of goods \cite{aziz2016control,ijcai2025p417,boehmer2024multivariate,DBLP:journals/algorithmica/DornHS21}.

Recently, Chandramouleeswaran et al.~\cite{DBLP:conf/ifaamas/Chandramouleeswaran25}
and Yuen et al.~\cite{DBLP:conf/isaac/Yuen0KS25} independently studied settings in which goods are already allocated---possibly unfairly---and
the goal is to reallocate them according to a predefined rule in order
to achieve fairness. Yuen et al.~\cite{DBLP:conf/isaac/Yuen0KS25} focus
on envy-freeness up to one good (EF1) as the fairness criterion and
consider pairwise exchanges of goods between agents as the reallocation
rule. Chandramouleeswaran et al.~\cite{DBLP:conf/ifaamas/Chandramouleeswaran25}
consider EF1 and EFX, focusing on identical or binary valuations. They
study sequences of good transfers in which no step creates new
EF1-envy among the agents.

\looseness -1
In this paper, we extend their work by considering envy-freeness (EF),
envy-freeness up to one good (EF1), and envy-freeness up to any good
(EFX) as fairness criteria.
Our reallocation objective is to minimize the number of reallocated goods.
From a practical perspective, 
it may be desirable to
\emph{minimize the number of goods} involved in the reallocation, since transferring resources can be
costly and disruptive, e.g.\ moving specialized equipment
between users or locations may incur significant logistical overhead,
so only a small number of changes should be made.
Our objective differs from the framework of Chandramouleeswaran
et al.~\cite{DBLP:conf/ifaamas/Chandramouleeswaran25}, which requires a
sequence of \emph{valid transfers}. We only constrain the final allocation and
the number of goods whose owners change. Although the two models appear closely related, they are fundamentally different: a yes-instance of our
problem may be a no-instance in their model, as illustrated in Observation~\ref{obs:example}.


\paragraph{Formal Model.}
Yuen et al.~\cite{DBLP:conf/isaac/Yuen0KS25} study this setting under the name
{\sc Reformability} while Chandramouleeswaran et al.~\cite{DBLP:conf/ifaamas/Chandramouleeswaran25} study the {$\beta$-\sc Restoration} problem, where $\beta\in \text{\{EF1, EFX\}}$. We use \realloc{} for the variant in which the reallocation cost is the number of goods whose owner changes, where 
$\beta\in \text{\{EF, EF1, EFX\}}$. We define the problem as follows.




\defproblem{\realloc}{A set of
agents $A = \{a_1,\dots,a_n\}$, a set of indivisible goods
$G = \{g_1,\dots,g_m\}$, a valuation function $v_i \colon G \rightarrow
\mathbb{Z}_{\geq 0}$ for each agent $a_i \in A$, an allocation
$\sigma =(\sigma_1,\dots,\sigma_n)$, and an integer $k$.}{Can we reach a
$\beta$-allocation from $\sigma$ by reallocating at most $k$ goods?}

\paragraph{Our Contribution.} Since EF is NP-hard, \reallocEF is also NP-hard. This follows by considering an initial allocation in which all goods are assigned to a single agent and setting $k=m$. Given the computational intractability of EF, we extend our study to the notions of EF1 and EFX. Although EF1 is polynomial-time solvable for monotone valuations~\cite{lipton04}, we establish NP-hardness for both \reallocEFone and \reallocEFX even for two agents [\Cref{thm:nptwo,thm:np}].

We then investigate the complexity of these problems under special valuation functions, in particular, identical and binary. We show that \reallocEFone is polynomial-time solvable for identical valuations when the number of agents is two [\Cref{thm:ef1-two-identical}]. Unfortunately, identical valuations do not lead to tractability beyond two agents: \reallocEFone is weakly NP-hard for identical valuations already when $n=3$ [\Cref{thm:npidentical-n3}], and in fact, strongly NP-hard for an arbitrary number of agents [\Cref{thm:npidentical}]. \reallocEFX,  is already NP-hard for identical valuations when the number of agents is two. Next, we consider binary valuations and show that all variants are polynomial-time solvable when the number of agents is constant [\Cref{thm:fpt-nvmax}], but become NP-hard for an arbitrary number of agents [\Cref{thm:npbin,thm:np}]. Furthermore, when valuations are both identical and binary, all problems admit polynomial-time algorithms [\Cref{thm:polyidx-ef1,thm:polyidx}]. Our complexity results match those of Yuen et al.~\cite{DBLP:conf/isaac/Yuen0KS25} and Chandramouleeswaran et al.~\cite{DBLP:conf/ifaamas/Chandramouleeswaran25} in all cases they consider.

Motivated by these hardness results, even under restricted valuation classes, we subsequently study the problems from a parameterized complexity perspective. 
With respect to the number of goods $m$, we design FPT algorithms for all the variants by showing that $n$ can be bounded by a polynomial function of $m$ [\Cref{cor:fptm}]. Since the problems are para-NP-hard with respect to $n$ (\Cref{thm:nptwo,thm:np}) and $v_{\max}$ (maximum valuation of an agent for a good) (\Cref{thm:npbin,thm:np}), we study the combined parameter $n+v_{\max}$, and design FPT algorithms [\Cref{thm:fpt-nvmax,thm:fpt-ef1-nvmax}]. Furthermore, for all the variants, we also show $W[2]$-hardness with respect to $k$ [\Cref{thm:npbin,thm:npbinX}], while also providing an XP algorithm with respect to $k$ [\Cref{cor:fptm}]. Despite the problems being FPT with respect to $n+v_{\max}$, we additionally consider the parameter $n+k+v_{\max}$ and design FPT algorithms [\Cref{thm:fpt-nvmaxk}] whose running time is singly exponential in the parameters, improving upon the doubly-exponential ILP-based algorithms for $n+v_{\max}$; this is particularly advantageous when $k$ is small relative to $(v_{\max}+1)^n$.

A notable message of our results is that, under a reallocation budget,
the computational gap between EF1 and the stronger notions EF and EFX
largely disappears. In classical fair division, EF1 is efficiently
computable under broad conditions~\cite{lipton04}, while EF is hard
even under restricted valuations~\cite{AZIZ201571}. In our setting,
EF1, EF, and EFX exhibit essentially the same hardness behavior across
valuation classes and parameterizations; the main exception is two
agents with identical valuations, where EF1 remains tractable.




\begin{table}[t]
\centering
\small
\setlength{\tabcolsep}{3pt}
\renewcommand{\arraystretch}{1.18}

\begin{tabular}{@{}llcccc@{}}
\toprule
\textbf{Fairness} & \textbf{$n$}
& \textbf{General}
& \textbf{Identical}
& \textbf{Binary}
&  \textbf{Id. \& Binary} \\
\midrule

\multirow{3}{*}{EF1}
& $2$
& NP-h~[\Cref{thm:nptwo}]
& P~[\Cref{thm:ef1-two-identical}]
& P~[\Cref{thm:fpt-nvmax}]
& P~[\Cref{thm:fpt-nvmax}] \\

& constant
& NP-h~[\Cref{thm:nptwo}]
& NP-h~[\Cref{thm:npidentical-n3}]
& P~[\Cref{thm:fpt-nvmax}]
& P~[\Cref{thm:fpt-nvmax}] \\

& general
& NP-h~[\Cref{thm:npbin}]
& NP-h~[\Cref{thm:npidentical}]
& NP-h~[\Cref{thm:npbin}]
& P~[\Cref{thm:polyidx-ef1}] \\

\midrule

\multirow{3}{*}{EF/EFX}
& $2$
& NP-h~[\Cref{thm:np}]
& NP-h~[\Cref{thm:np}]
& P~[\Cref{thm:fpt-nvmax}]
& P~[\Cref{thm:fpt-nvmax}] \\

& constant
& NP-h~[\Cref{thm:np}]
& NP-h~[\Cref{thm:np}]
& P~[\Cref{thm:fpt-nvmax}]
& P~[\Cref{thm:fpt-nvmax}] \\

& general
& NP-h~[\Cref{thm:np}]
& NP-h~[\Cref{thm:np}]
& NP-h~[\Cref{thm:np}]
& P~[\Cref{thm:polyidx,thm:polyidx-ef1}] \\

\bottomrule
\end{tabular}
\vspace{.5em}
\caption{Complexity landscape of \realloc. 
P denotes polynomial-time solvability, NP-h denotes NP-hardness, and $n$ is the number of agents.}
\label{table:hard}
\end{table}

See \Cref{table:hard} for an overview of classical complexity and
\Cref{fig:params-overview} for parameterized complexity. 

\newcommand{\Wone}{$W[1]$}
\newcommand{\Wtwo}{$W[2]$}

\tikzset{
	textpara/.style = {
		align = center,
		font = \small,
	},
	para/.style = {
		textpara,
		draw,
		rectangle,
		rounded corners = 1.5mm,
		inner sep = 4pt,
		line width = 0.45pt,
		text width = 2.55cm,
		minimum height = 1.05cm
	},
	paranp/.style = {
		para,
		fill=red!12,
		draw=red!55!black,
	},
	paraxp/.style = {
		para,
		fill=orange!16,
		draw=orange!70!black,
	},
	parapk/.style = {
		para,
		fill=green!14,
		draw=green!50!black,
	},
	parapoly/.style = {
		para,
		fill=blue!12,
		draw=blue!55!black,
	},
	arrow/.style = {
		semithick,
		->,
		>=Stealth,
		draw=black!65
	},
	reductionarrow/.style = {
		semithick,
		->,
		>=Stealth,
		dashed,
		draw=black!65
	}
}

\begin{figure}[t]
	\centering
	\begin{tikzpicture}

		\node[parapk] (nvmaxk) at (0, 3.0)
			{$n+v_{\max}+k$\\
			\textbf{FPT}\\
			\Cref{thm:fpt-nvmaxk}};

		\node[parapk] (m) at (-4.0, 1.0)
			{$m$\\
			\textbf{FPT}\\
			\Cref{cor:fptm}};

		\node[parapk] (nvmax) at (0, 1.0)
			{$n+v_{\max}$\\
			\textbf{FPT}\\
			\Cref{thm:fpt-nvmax,thm:fpt-ef1-nvmax}};

		\node[paraxp] (kvmax) at (4.0, 1.0)
			{$k+v_{\max}$\\
			\textbf{XP}, \textbf{\Wtwo-hard}\\
			\Cref{cor:fptm,thm:npbin,thm:npbinX}};

		\node[paranp] (n) at (-4.0, -1.25)
			{$n$\\
			\textbf{para-NP-hard}\\
			\Cref{thm:nptwo,thm:np}};

		\node[paraxp] (k) at (0, -1.25)
			{$k$\\
			\textbf{XP}, \textbf{\Wtwo-hard}\\
			\Cref{cor:fptm,thm:npbin,thm:npbinX}};

		\node[paranp] (vmax) at (4.0, -1.25)
			{$v_{\max}$\\
			\textbf{para-NP-hard}\\
			\Cref{thm:npbin,thm:np}};

		\draw[arrow] (nvmaxk) -- (nvmax);
		\draw[arrow] (nvmaxk) -- (kvmax);

		\draw[arrow] (nvmax) -- (n);
		\draw[arrow] (nvmax) -- (vmax);

		\draw[arrow] (kvmax) -- (k);
		\draw[arrow] (kvmax) -- (vmax);

		\draw[reductionarrow] (m) --
			(n);

		\draw[arrow] (m) -- (k);

	\end{tikzpicture}

	\caption{
		Overview of our parameterized results. Here, $n$ denotes the number of
		agents, $m$ the number of goods, $k$ the reallocation budget, and
		$v_{\max}$ the maximum value of a good. Solid arrows indicate direct
		parameter implications. The dashed edge $m \to n$ follows from the agent-reduction argument in
		\cref{lem:red-ef1-efx}.
	}
	\label{fig:params-overview}
\end{figure}




\section{Preliminaries}
Let $[n] = \{1,\dots, n\}$. 

\smallskip
\noindent\emph{\bf Model.}
Let $A = \{a_1,\dots,a_n\}$ be a set of agents and $G = \{g_1,\dots,g_m\}$
a set of indivisible goods.
Each agent $a_i \in A$ has an \emph{additive} valuation function $v_i : G \rightarrow \mathbb{Z}_{\ge 0}$, such that for every subset $S\subseteq G$ we have $v_i(S) = \sum_{g \in S} v_i(g)$. We say that valuations are \emph{binary} if for every agent $a_i \in A$ and good $g \in G$, $v_i(g) \in \{0,1\}$. If all agents share the same valuation
function~$v$, such that for all $a_i \in A$,
$v_i = v$, we say that valuations are \emph{identical}.

An allocation is a partition $\sigma = (\sigma_1,\dots,\sigma_n)$ of $G$, where $\sigma_i \subseteq G$ is the bundle assigned to agent $a_i$,
$\sigma_i \cap \sigma_j = \emptyset$ for $i \neq j$, and 
$\bigcup_{a_i \in A}  \sigma_i = G$. 
We denote by $\sigma(g)$ the agent holding good $g$ under allocation $\sigma$. We write $v|_A$ for the restriction of $v$ to $A$ and $\sigma|_A$ for the restriction of $\sigma$ to $A$.
We denote by $v_{\max} = \max_{a_i \in A,\, g \in G} v_i(g)$ the maximum value assigned to any single good. 

\smallskip
\noindent\emph{\bf Fairness.}
Given an allocation $\sigma$, agent $a_i$ envies $a_j$ if
$v_i(\sigma_i) < v_i(\sigma_j)$.
Allocation $\sigma$ is \emph{envy-free (EF)} if no agent envies any
other agent. 

An allocation $\sigma$ is \emph{envy-free up to any good (EFX)}~\cite{DBLP:journals/teco/CaragiannisKMPS19,DBLP:journals/siamdm/PlautR20} if for every pair of agents $a_i,a_j \in A$ with $\sigma_j \neq \emptyset$, and for every good $g \in \sigma_j$, we have $v_i(\sigma_i) \;\ge\; v_i(\sigma_j \setminus \{g\}).$
We say agent $a_i$ \emph{EFX-envies} agent $a_j$ (under $\sigma$) if there exists a good $g \in \sigma_j$ 
such that $v_i(\sigma_i) < v_i(\sigma_j \setminus \{g\})$.

An allocation $\sigma$ is \emph{envy-free up to one good (EF1)} if for every pair of agents $a_i,a_j \in A$ with $\sigma_j \neq \emptyset$, there exists a good $g \in \sigma_j$ such that $v_i(\sigma_i) \;\ge\; v_i(\sigma_j \setminus \{g\}).$
Agent $a_i$ \emph{EF1-envies} agent $a_j$ (under $\sigma$) if $v_i(\sigma_i) < v_i(\sigma_j \setminus \{g\})$ for every $g \in \sigma_j$.

A valuation function $v_i$ is \emph{monotone} if $v_i(S) \leq v_i(T)$ for all $S \subseteq T \subseteq G$. Additive valuations with $v_i : G \rightarrow \mathbb{Z}_{\ge 0}$ are automatically monotone; EF1 allocations can be computed in polynomial time under monotone valuations~\cite{lipton04}.

\smallskip
\noindent\emph{\bf Reallocation Distance.}
Given two allocations $\sigma$ and $\sigma'$, define their distance as
$
d(\sigma,\sigma')
= \bigl|\{\, g \in G \mid \sigma(g) \neq \sigma'(g) \,\}\bigr|,
$
i.e., the number of goods whose assigned agent changes.

\smallskip
\noindent\emph{\bf Parameterized Complexity.}
A \emph{parameterized problem} is a language $L \subseteq \Sigma^{*} \times \mathbb{N}$; the second component is the \emph{parameter}. A problem is \emph{fixed-parameter tractable} (\emph{FPT}) if every instance $(x,\kappa)$ can be solved in time $f(\kappa)\cdot|x|^{O(1)}$ for some computable function $f$. A problem is in \emph{XP} (\emph{slice-wise polynomial}) if it can be solved in time $|x|^{f(\kappa)}$. Note that $\mathrm{FPT} \subseteq \mathrm{XP}$. 
The \emph{$W$-hierarchy} $\mathrm{FPT} \subseteq W[1] \subseteq W[2] \subseteq \cdots$ provides a scale of parameterized intractability: a problem that is \emph{$W[1]$-hard} (resp.\ \emph{$W[2]$-hard}) is not in FPT unless $\mathrm{FPT} = W[1]$.
A problem is \emph{para-NP-hard} with respect to a parameter $\kappa$ if it is NP-hard already for some fixed value of $\kappa$; para-NP-hard problems are not in XP unless $\mathrm{P}=\mathrm{NP}$. We refer to~\cite{DF13} for further background.

\noindent\emph{Integer Linear Programming.}
Our FPT algorithms use the following classical result of Lenstra~\cite{Lenstra83}: feasibility of an integer linear program with $p$ integer variables and encoding size $L$ can be decided in time $p^{O(p)}\cdot\mathrm{poly}(L)$. In particular, when $p$ depends only on the parameter, this yields an FPT algorithm.

\smallskip
\noindent\emph{\bf Relation between fairness criteria.} Interestingly, EF and EFX are related in the world of reallocation. 


\begin{lemma}\label[lemma]{lem:ef-efx}
Given an instance~$(A,G,v,\sigma,k)$ of \reallocEF, one can construct an equivalent instance~$(A,G',v',\sigma',k)$ of \reallocEFX in polynomial time. Furthermore, $|G'|=|G|+|A|(k+1)$ and $v_{max}'=v_{max}$.
\end{lemma}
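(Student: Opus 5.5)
The plan is to pad each agent's bundle with enough worthless goods that EFX collapses to EF. The point is that under EFX, removing a good of value $0$ does not lower the envied bundle's value. So if every bundle is guaranteed to contain such a good, the EFX condition for that pair is exactly the EF condition.

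\emph{Construction.} We may assume $k\le |G|$, since otherwise we replace $k$ by $|G|$ without changing the answer. This also makes the construction polynomial even when $k$ is given in binary. For every agent $a_i\in A$ we introduce $k+1$ fresh dummy goods $d_i^1,\dots,d_i^{k+1}$. We set $G'=G\cup\{d_i^j\}$, so $|G'|=|G|+|A|(k+1)$. Let $v'_\ell$ agree with $v_\ell$ on $G$ and set $v'_\ell(d_i^j)=0$ for every agent $a_\ell$. Let $\sigma'_i=\sigma_i\cup\{d_i^1,\dots,d_i^{k+1}\}$, and keep the same budget $k$. Since all values are nonnegative, $v'_{\max}=v_{\max}$.

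\emph{Forward direction.} Suppose an EF allocation $\pi$ is reachable from $\sigma$ with at most $k$ moves. Extend $\pi$ by leaving every dummy with its original owner. This extension has the same distance from $\sigma'$ as $\pi$ has from $\sigma$. Dummies add value $0$ to every bundle, so the extension is still EF. EF implies EFX, because removing a good from $\sigma_j$ can only lower $v_i(\sigma_j)$.

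\emph{Backward direction (the main step).} Let $\pi'$ be an EFX allocation with $d(\sigma',\pi')\le k$. At most $k$ of agent $a_j$'s $k+1$ dummies can move, so $\pi'_j$ still contains some dummy $d$. EFX applied to this good gives
\[
v'_i(\pi'_i)\ \ge\ v'_i(\pi'_j\setminus\{d\})\ =\ v'_i(\pi'_j)\quad\text{for all } i,j,
\]
so $\pi'$ is EF. Define $\pi$ on $G$ by $\pi(g)=\pi'(g)$ for $g\in G$, that is, discard the dummies. Dummies have value $0$, so $v_i(\pi_i)=v'_i(\pi'_i)$ and $v_i(\pi_j)=v'_i(\pi'_j)$, and hence $\pi$ is EF. Only goods of $G$ count toward $d(\sigma,\pi)$, and $d(\sigma,\pi)\le d(\sigma',\pi')\le k$.

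The only delicate point is the choice of $k+1$ dummies per agent. With only $k$ dummies, a solution could strip some agent of all of them, and EFX would no longer force EF for that agent's bundle.
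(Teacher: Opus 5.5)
Your proposal is correct and follows the paper's proof essentially verbatim. Like the paper, you add $k+1$ zero-valued dummies per agent, keep the dummies in place for the forward direction, and use a surviving dummy to turn the EFX inequality into EF for the backward direction. Your normalization $k\le|G|$ is a small improvement: it makes the construction polynomial in the input size even when $k$ is encoded in binary, whereas the paper only argues polynomiality in the size of the output.
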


\begin{proof}
Let~$G=\{g_1,\dots,g_m\}$ and let~$n=|A|$.  We construct~$G'$ by adding $n(k+1)$ dummy goods~$g_{a,i}$, where $a\in A$ and
$i\in\{1,\dots,k+1\}$.  Every dummy good has value zero for every
agent, that is, $v_{a'}(g_{a,i})=0$ for all $a,a'\in A$ and all~$i$.

The allocation~$\sigma'$ coincides with~$\sigma$ on the original goods
and, additionally, assigns the dummy goods~$g_{a,1},\dots,g_{a,k+1}$
to each agent~$a \in A$. 

We show that the two instances are equivalent.

First suppose that the original instance is a yes-instance.  Then there
is an allocation~$\pi$ of~$G$ that is envy-free and differs from~$\sigma$
in at most~$k$ reallocations.  Extend~$\pi$ to an allocation~$\pi'$
of~$G'$ by keeping all dummy goods with their owners from~$\sigma'$.
Since dummy goods have value zero for every agent, the values of all
bundles are unchanged. Thus, for every agent~$a$ and every bundle
$X\subseteq G'$, $v_a(X)=v_a(X\cap G)$.  Hence~$\pi'$ is envy-free
on~$G'$, and therefore also EFX.  Moreover, $\pi'$ differs
from~$\sigma'$ in exactly the same original goods as~$\pi$ differs
from~$\sigma$, so it uses at most~$k'=k$ reallocations.  Thus the
constructed EFX-instance is a yes-instance.

Conversely, suppose that the constructed instance has an EFX
allocation~$\pi'$ differing from~$\sigma'$ in at most~$k'=k$
reallocations.  Since each agent initially owns~$k+1$ dummy goods and
at most~$k$ goods are reallocated in total, every agent still owns at
least one dummy good in~$\pi'$.

We claim that the restriction~$\pi$ of~$\pi'$ to the original
goods~$G$ is envy-free.  Let~$a,a'\in A$.  Since~$a'$ owns at least
one dummy good~$g$ in~$\pi'$, EFX for~$\pi'$ implies
$v_a(\pi'_a) \geq v_a(\pi'_{a'}\setminus\{g\})$.  Because~$g$ has
value zero for~$a$, we have
$v_a(\pi'_{a'}\setminus\{g\}) = v_a(\pi'_{a'})$. Therefore
$v_a(\pi'_a)\geq v_a(\pi'_{a'})$ for all agents~$a,a'$, so~$\pi'$ is
envy-free.  Removing dummy goods does not change any value, hence the
restricted allocation~$\pi$ is envy-free on the original instance.
Since deleting dummy goods cannot increase the number of reallocated
original goods,~$\pi$ differs from~$\sigma$ in at most~$k$ reallocations.

Thus the original instance is a yes-instance if and only if the
constructed instance is a yes-instance.  The construction only
adds~$n\cdot(k+1)$ dummy goods and can thus be computed in polynomial time
in the size of the output. Moreover, the construction preserves the
number of agents and the reallocation budget~$k$.  It also preserves
$v_{\max}$, since all added dummy goods have value zero for every agent. 
\end{proof} 

\paragraph{\bf Comparison to previous models.}
Our work and Chandramouleeswaran et al.~\cite{DBLP:conf/ifaamas/Chandramouleeswaran25} both study how to repair an initially unfair allocation, but measure the distance between two allocations in fundamentally different ways. We use the \emph{Hamming distance} $d(\sigma,\sigma') =
|\{g \in G \mid \sigma(g) \neq \sigma'(g)\}|$ between the initial allocation $\sigma$ and a target allocation $\sigma'$. Thus, \realloc{} asks whether there exists an EF1 allocation within Hamming distance at most $k$ from $\sigma$,
without 
constraints on
intermediate allocations.

By contrast, Chandramouleeswaran et al. require the repair to proceed
through a sequential path of \emph{valid transfers}, where each 
is a single-good move that does not increase EF1-envy, and they minimise
the number of such transfers. The two models measure reallocation cost differently. An EF1 allocation may be close in Hamming distance to the initial allocation even though it is not reachable via valid transfers. Moreover, a valid-transfer path may require many intermediate moves even when its endpoint differs from the initial allocation on only a few goods.

\begin{observation}\label{obs:example}
Our notion of reformability does not require EF1 to be preserved during
the reallocation process. Hence, there exists an instance that can be
made EF1 by reallocating at most $k \geq 2$ goods, even though EF1-Restoration, as defined by Chandramouleeswaran et al.~\cite{DBLP:conf/ifaamas/Chandramouleeswaran25}, is impossible: no valid transfer sequence from the initial allocation reaches an EF1 allocation.
\end{observation}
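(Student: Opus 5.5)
The plan is to give a small explicit instance and check it by hand. I would use two agents $a_1,a_2$ with identical additive valuations, because then EF1 and the valid-transfer condition can be read off from bundle values alone. The initial allocation $\sigma$ will not be EF1, but some EF1 allocation will lie at Hamming distance $2$ from it. At the same time, \emph{every} single-good transfer out of $\sigma$ should be invalid in the sense of Chandramouleeswaran et al.: it either creates new EF1-envy or fails to reduce the existing EF1-envy. If that holds, no valid transfer sequence can even start, so EF1-Restoration is impossible.

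A natural candidate is a swap instance. Let $a_1$ hold one large good $h$ with $v(h)=M$ plus many small goods. Let $a_2$ hold a few medium goods whose values are arranged so that $a_2$ EF1-envies $a_1$. Swapping $h$ against a suitable medium good, which reallocates exactly $2$ goods, should yield an EF1 allocation.

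I then check that every single transfer from $\sigma$ is invalid.
\begin{itemize}
\item Moving a good from $a_2$ to $a_1$ only widens the value gap, so it increases $a_2$'s envy.
\item Moving a small good from $a_1$ to $a_2$ should be too weak to remove $a_2$'s EF1-envy. Whether this counts as invalid depends on their exact definition, which I recall as ``the transfer does not increase EF1-envy and strictly reduces unfairness.''
\item Moving $h$ to $a_2$ must create EF1-envy from $a_1$ toward $a_2$. For this I choose $M$ larger than the sum of the remaining small goods plus the largest medium good.
\end{itemize}

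Concretely, I would try goods with values $10$ (this is $h$), $1,1$ held by $a_1$, and $5,5,4$ held by $a_2$. Then I tune the numbers, verifying the EF1 conditions by checking the removal of the most valuable good in each bundle.

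The main obstacle is matching the precise definition of a valid transfer from the cited paper, in particular whether transfers that keep the EF1-envy level unchanged are allowed. If they are, plateau moves could open up a long path, and I would need to rule out all reachable allocations, not just the first step. To handle that, I would choose values so that the set of allocations reachable by single valid moves is small and closed, and then exhaustively check that none of them is EF1.
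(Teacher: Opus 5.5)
Your plan has a genuine gap. Two agents with identical valuations can never witness this observation under the validity rule the paper checks: a transfer is invalid exactly when it introduces an EF1-envy pair that was not present before. With two identical agents only the poorer one can envy, so a non-EF1 allocation has exactly one EF1-envy pair, say $(a_2,a_1)$. That means $v(\sigma_2)<v(\sigma_1\setminus\{g\})$ for every $g\in\sigma_1$.

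If you move any $g\in\sigma_1$ to $a_2$, then removing $g$ again from $a_2$'s new bundle leaves value $v(\sigma_2)<v(\sigma_1\setminus\{g\})$, so $a_1$ does not EF1-envy $a_2$. If you move a good from $a_2$ to $a_1$, then $a_1$ only gets richer and envies nobody. Hence \emph{every} single transfer is valid, which has three consequences:
\begin{itemize}
\item Your third bullet is impossible for any choice of $M$: moving $h$ can never make $a_1$ EF1-envy $a_2$ once $a_2$ EF1-envies $a_1$.
\item Your first two bullets describe valid moves, not invalid ones.
\item Repeatedly moving the most valuable good from the richer to the poorer bundle is a valid path that reaches EF1. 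This is the greedy step of Algorithm~\ref{alg:ef1-two-agents}, analysed in \Cref{thm:ef1-two-identical}.
\end{itemize}
So no tuning of the numbers can help. Your sample values also fail at the start: $v(\{10,1,1\})=12<14=v(\{5,5,4\})$ and $14-5\le 12$, so $\sigma$ is already EF1. Falling back to a stricter ``must strictly reduce unfairness'' rule is not the model being contrasted here.

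The missing ingredient is an additional agent whose EF1 constraints are satisfied but tight, so that every single transfer creates a fresh envy pair involving it, while the violation to be repaired lies between two other agents. The paper uses three agents with binary, non-identical valuations:
\begin{itemize}
\item $a_1$ values only $g'_1,g'_2,g^\star$;
\item $a_2$ values only $g_1,g_2,g^\star$;
\item $a_3$ values every good at $1$.
\end{itemize}
The initial allocation is $\sigma=(\{g_1,g_2\},\{g'_1,g'_2\},\{g^\star\})$, so $E(\sigma)=\{(a_1,a_2),(a_2,a_1)\}$. The swap $g'_1\to a_1$, $g_2\to a_2$ gives an EF1 allocation at distance $2$. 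Every single transfer creates a new EF1-envy pair:
\begin{itemize}
\item if it gives some bundle three goods, or empties $a_3$'s bundle, then $a_3$ newly EF1-envies someone;
\item if it gives $a_3$ a second good valued by $a_1$ or $a_2$, then that agent newly EF1-envies $a_3$.
\end{itemize}
Because no first step is valid, you never need the reachability analysis over plateau moves that you were worried about.
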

\begin{proof} Consider the following instance. Let $A=\{a_1,a_2,a_3\}$ be the set of agents and let $G=\{g_1,g_2,g'_1,g'_2,g^\star\}$ be the set of goods. The initial allocation is
\[
\sigma_1=\{g_1,g_2\},\qquad
\sigma_2=\{g'_1,g'_2\},\qquad
\sigma_3=\{g^\star\}.
\]
The agents have additive valuations given by
\[
v_1(x)=
\begin{cases}
1 & \text{if } x\in\{g'_1,g'_2,g^\star\},\\
0 & \text{otherwise},
\end{cases}
\qquad
v_2(x)=
\begin{cases}
1 & \text{if } x\in\{g_1,g_2,g^\star\},\\
0 & \text{otherwise},
\end{cases}
\]
and $v_3(x)=1$ for every $x\in G$.

Let $E(\sigma)$ denote the set of ordered pairs $(a_i,a_j)$ such that
$a_i$ EF1-envies $a_j$ under allocation $\sigma$. Initially, $E(\sigma)=\{(a_1,a_2),(a_2,a_1)\}$.
Indeed, agent $a_1$ values her own bundle at $0$ and agent $a_2$'s bundle at $2$. Removing one good from $\sigma_2$ still leaves value $1$, so $(a_1,a_2)\in E(\sigma)$.
Symmetrically, $(a_2,a_1)\in E(\sigma)$. Agent $a_3$ values her own bundle at $1$ and each of the bundles $\sigma_1,\sigma_2$ at $2$, so after removing one good from either bundle the remaining value is $1$. Hence agent $a_3$ does not EF1-envy anyone.
Moreover, agents $a_1$ and $a_2$ do not EF1-envy agent $a_3$, since $\sigma_3$ contains only one good.

Consider the allocation
\[
\sigma'_1=\{g'_1,g_1\},\qquad
\sigma'_2=\{g'_2,g_2\},\qquad
\sigma'_3=\{g^\star\}.
\]
It is obtained from $\sigma$ by the two reallocations: $g'_1$ is reallocated from $a_2$ to $a_1$ and 
$g_2$ is reallocated from $a_1$ to $a_2$. Furthermore,
$v_1(\sigma'_1)=v_1(\sigma'_2)=v_1(\sigma'_3)=1$
and
$v_2(\sigma'_1)=v_2(\sigma'_2)=v_2(\sigma'_3)=1.$
Thus agents $a_1$ and $a_2$ do not envy anyone. Agent $a_3$ has value $1$ for her own bundle and value $2$ for each of $\sigma'_1$ and $\sigma'_2$; after removing one good from either of those bundles, the remaining value is $1$. Hence $\sigma'$ is EF1.

We now show that two reallocations are necessary. After a single reallocation,
at least one of agents $a_1$ and $a_2$ still receives value $0$ from her own bundle. If agent $a_1$ still receives value $0$, then some other agent holds at least two goods from $\{g'_1,g'_2,g^\star\}$, and therefore agent $a_1$ EF1-envies that agent. The case of agent $a_2$ is symmetric. Hence no allocation reachable from $\sigma$ by one
reallocation is EF1. Therefore the minimum number of reallocations needed to
reach an EF1 allocation is exactly $2$.

It remains to show that no first transfer from $\sigma$ is valid. Let
$\sigma^\star$ be the allocation obtained after an arbitrary single-good
reallocation from $\sigma$. We distinguish cases according to the
reallocated good.

Since $g'_1$ and $g'_2$ have identical values for all agents, it suffices to
consider a representative good $g'$. Similarly, we write $g$ for either
$g_1$ or $g_2$. Suppose that a good is moved from agent $a_1$ to agent $a_3$.
Then the transferred good is a $g$-good, so $\sigma^\star_3=\{g,g^\star\}$, while
$v_2(\sigma^\star_2)=0$
 and 
$v_2(\sigma^\star_3)=2$.
Removing either good from $\sigma^\star_3$ leaves value $1$ for agent $a_2$. Hence $(a_2,a_3)\in E(\sigma^\star)$.
Since $(a_2,a_3)\notin E(\sigma)$, this is not a valid transfer.

If a good is moved from agent $a_2$ to agent $a_3$, then agent $a_3$ receives a $g'$ good. Thus $\sigma^\star_3=\{g',g^\star\}$, while
$v_1(\sigma^\star_1)=0$
 and 
$v_1(\sigma^\star_3)=2$.
Removing either good from $\sigma^\star_3$ leaves value $1$ for agent $a_1$. Hence $(a_1,a_3)\in E(\sigma^\star)$.
Since $(a_1,a_3)\notin E(\sigma)$, this is not a valid transfer.

If a good is moved from agent $a_1$ to agent $a_2$, then agent $a_2$ has a bundle of size $3$. Since agent $a_3$ values every good at $1$, $v_3(\sigma^\star_3)=1$ and $v_3(\sigma^\star_2)=3$. 
After removing any single good from $\sigma^\star_2$, agent $a_3$ still values the remaining bundle at $2$. Hence
$(a_3,a_2)\in E(\sigma^\star)$.
Since $(a_3,a_2)\notin E(\sigma)$, this is not a valid transfer.

If a good is moved from agent $a_2$ to agent $a_1$, then agent $a_1$ has a bundle of size $3$. Since agent $a_3$ values every good at $1$, 
$v_3(\sigma^\star_3)=1$ and
$v_3(\sigma^\star_1)=3$.
After removing any single good from $\sigma^\star_1$, agent $a_3$ still values the remaining bundle at $2$. Hence
$(a_3,a_1)\in E(\sigma^\star)$.
Since $(a_3,a_1)\notin E(\sigma)$, this is not a valid transfer.

Finally, if the good $g^\star$ is moved from agent $a_3$ to agent $a_1$, then
agent $a_3$'s bundle becomes empty while agent $a_1$ has three goods. Hence
$v_3(\sigma^\star_3)=0$ and
$v_3(\sigma^\star_1)=3$.
After removing any single good from $\sigma^\star_1$, agent $a_3$ still values the remaining bundle at $2$, so $(a_3,a_1)\in E(\sigma^\star)$. This pair was not in $E(\sigma)$.

Similarly, if $g^\star$ is moved from agent $a_3$ to agent $a_2$, then $(a_3,a_2)\in E(\sigma^\star)$, again creating new EF1-envy.

Thus, every possible single-good reallocation from $\sigma$ creates at
least one EF1-envy pair that was not present initially. Consequently,
although an EF1 allocation can be reached optimally with two
reallocations, no corresponding transfer path is valid, since validity
requires that no new EF1-envy pair be introduced at any intermediate
step.
\end{proof}

\section{Classical Complexity}
We study \realloc{} under increasingly restricted valuations: general, identical, binary, and finally both.
\subsection{General valuations}
We start with arbitrary additive valuations and show hardness for two agents.

\begin{theorem}\label{thm:nptwo}
\reallocEFone\ is weakly NP-hard even with two agents.
\end{theorem}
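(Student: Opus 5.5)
We reduce from \textsc{Partition}: given positive integers $x_1,\dots,x_t$ with $\sum_i x_i = 2T$, decide whether some $S\subseteq[t]$ has $\sum_{i\in S}x_i=T$. Since \textsc{Partition} is only weakly NP-hard, this gives weak NP-hardness. The guiding idea is that EF1 allows each agent to ignore one good in the other's bundle. We plant a heavy good in each bundle that the envious agent will always choose to remove. This turns EF1 into an exact envy-freeness condition on the ``item'' goods. We then make the two agents' valuations asymmetric enough that this envy-freeness condition amounts to an exact balance $\sum_{i\in S}x_i=T$ for the set $S$ of moved items.

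\textbf{Construction.} There are two agents $a_1,a_2$, item goods $g_1,\dots,g_t$ with $v_1(g_i)=v_2(g_i)=x_i$, and two heavy goods $h_1,h_2$. Let $H=2T+1$. Agent $a_1$ values $h_1$ at $0$ and $h_2$ at $H$. Agent $a_2$ values $h_1$ at $H$ and $h_2$ at $0$. Initially $\sigma_1=\{g_1,\dots,g_t,h_1\}$ and $\sigma_2=\{h_2\}$, and the budget is $k=t$. If we want to exclude moves of the heavy goods outright, we replace each $h_j$ by $k+1$ copies as in the proof of \Cref{lem:ef-efx}. However, I first plan to handle these moves directly by the case analysis below.

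\textbf{Forward direction.} Given a partition $S$, move the items $\{g_i : i\in S\}$ to $a_2$; this reallocates $|S|\le t=k$ goods. For $a_2$, its own bundle is worth $T$ and $a_1$'s bundle is worth $T+H$. Removing $h_1$ leaves $T$, so $a_2$ does not EF1-envy $a_1$. Symmetrically, $a_1$ values its own bundle at $T$ and $a_2$'s bundle at $T+H$; removing $h_2$ leaves $T$. Hence the allocation is EF1.

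\textbf{Backward direction.} Take an EF1 allocation reachable within budget, and first assume $h_1$ stays with $a_1$ and $h_2$ stays with $a_2$. Let $S$ be the set of items held by $a_2$. Because $H$ exceeds every item value, removing the heavy good is the best choice for each envious agent. The EF1 condition for $a_2$ then reads $x(S)\ge x([t]\setminus S)$, and the condition for $a_1$ reads $x([t]\setminus S)\ge x(S)$. Together these give $x(S)=T$.

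\textbf{Main obstacle.} The hardest step is ruling out solutions that move a heavy good. Suppose $a_2$ ends up with both $h_1$ and $h_2$. Then $a_1$ sees value at least $H$ in $a_2$'s bundle even after removing one good, while $a_1$ holds at most $2T<H$ of value to itself, so $a_1$ EF1-envies $a_2$. The case where $a_1$ holds both heavy goods is symmetric. Now suppose the heavy goods are swapped, so $h_2$ is with $a_1$ and $h_1$ is with $a_2$. Then each agent already values its own bundle at least $H>2T$, which exceeds the whole item mass, so EF1 holds for free and the reduction breaks. I expect this swap case to be the real difficulty. My first fix is the duplication trick: use $k+1$ copies of each heavy good, with the copies of $h_1$ worth $H/(k+1)$ to $a_2$ and $0$ to $a_1$, and the copies of $h_2$ worth $H/(k+1)$ to $a_1$ and $0$ to $a_2$. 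Since at most $k$ copies can move, a full swap becomes impossible. I then need to recheck that the removal of a single good still behaves correctly, which may force setting $H$ larger relative to $T$. If this becomes messy, my fallback is to keep single heavy goods but set $k$ to exactly the cardinality of the desired moved set. For that I would reduce from equal-cardinality \textsc{Partition} and add $B>2T$ to every item value. The value equation then forces $|S|=t/2=k$, which leaves no budget to move the heavy goods as well.
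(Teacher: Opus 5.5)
\textbf{There is a genuine gap: your main reduction fails outright, not only in the swap case.} Reallocate just $h_1$ to $a_2$. This is one move, so it fits the budget $k=t\ge 1$, and it gives $\sigma'_1=\{g_1,\dots,g_t\}$ and $\sigma'_2=\{h_1,h_2\}$. Agent $a_2$ values her bundle at $H=2T+1>2T=v_2(\sigma'_1)$, so she does not envy. Agent $a_1$, after removing $h_2$ from $\sigma'_2$, sees only $v_1(h_1)=0$. So this allocation is EF1, and every \textsc{Partition} instance is mapped to a yes-instance.

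Your argument for the case ``$a_2$ holds both heavy goods'' breaks exactly here. Since $v_1(h_1)=0$, all the heavy value that $a_1$ sees sits in the single good $h_2$, which EF1 lets her remove. The ``symmetric'' case is also feasible: give $h_2$ to $a_1$ and items of total value at least $T$ to $a_2$. The defect is structural. A good worth $0$ to its holder and $H$ to the other agent is exactly a good whose single transfer both raises the envious agent's value and strips the envied bundle.

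\textbf{Neither fix closes the gap.}

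\emph{Duplication.} With $k+1$ copies worth $H/(k+1)$ each, a single removal cancels only $\max\{H/(k+1),\max_i x_i\}<H$ of the heavy value. So in your intended allocation from a perfect partition, each agent still EF1-envies the other. The forward direction breaks, and enlarging $H$ only makes it worse.

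\emph{Cardinality fallback} (item values $x_i+B$, $k=t/2$). The claim that no budget is left for heavy goods is circular: the equation forcing $|S|=t/2$ was derived with the heavy goods fixed. For the fixed case to force $x(S)=T$ you need $H\ge B+\max_i x_i$; otherwise $a_2$ removes an item rather than $h_1$. Under that condition, move $h_1$ to $a_2$ together with the $t/2-1$ largest items other than a maximum one. This uses exactly $t/2$ moves and is always EF1:
\begin{itemize}
\item $a_1$ removes $h_2$ and holds two more items than $a_2$, so she does not EF1-envy.
\item $a_2$'s condition, after removing the largest item from $a_1$'s bundle, is implied by $\max_i x_i+x(S)\ge T$. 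This holds because the $t/2$ largest numbers sum to at least $T$.
\end{itemize}

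\textbf{The missing idea is the paper's one-sided design.} There, $a_1$ starts with a single good $g_{\max}$ that \emph{both} agents value at $(\frac{m^2}{2}+1)B$. Agent $a_2$ holds all items, each valued $s_i+mB$, with budget $k=m/2$. She also holds $\frac{m^2}{2}+1$ goods that only $a_1$ values, each just above $B$ and hence too cheap to be worth a reallocation. Then:
\begin{itemize}
\item $a_1$'s EF1 condition (removing the largest item) forces her to receive exactly $m/2$ items with $s$-sum at least $B$;
\item $a_2$'s condition (removing $g_{\max}$) forces that sum to be at most $B$.
\end{itemize}
The only heavy good is valued equally by both agents and already sits with the poorer one, so no single transfer offers a shortcut.
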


\begin{proof}
We reduce from \textsc{Equal cardinality partition}
\cite{DBLP:books/fm/GareyJ79}: Given a set of integers
$S = \{s_1, \dots, s_m\}$, such that $\sum_{i\in [m]} s_i = 2B$, can we partition
$S$ into $S_1, S_2$ such that they sum up to the same value and
$|S_1|=|S_2|$?

We construct an instance $I'=(A,G,v,\sigma,k)$ of \reallocEFone{} as follows.
Let $A = \{a_1,a_2\}$ denote the set of agents. We may assume w.l.o.g. that the largest integer $s_{\max}\le B$ and $m>2$, since otherwise the instance is trivial. We create the set of goods $G$ by adding a good $g_i$ with
$v_1(g_i)=v_2(g_i)=s_i+mB$ for each integer $s_i \in S$, where we
denote by $s_{\max} = \max \{s_i \in S\}$.  We further add a good
$g_{\max}$ and a set $G^\star$ of $\frac{m^2}{2}+1$ goods with
valuations
\begin{align*}
    &v_1(g_{\max})=v_2(g_{\max})=(\frac{m^2}{2}+1)B \\
&v_1(g)=B+\frac{s_{\max}+mB}{\frac{m^2}{2}+1} \quad \text{ and } \quad 
    v_2(g)=0 \quad \text{ for all } g\in G^\star
\end{align*}
We set the initial allocation as follows:
$\sigma_1 = \{g_{\max}\}$ and
$\sigma_2 = \{g_i \mid s_i \in S\} \cup G^\star$.
Finally set $k = \frac{m}{2}$.  The construction can be performed in
polynomial time. 

We next show the equivalence of the two instances.

$(\Rightarrow)$ Let $S_1, S_2$ be an equal-cardinality partition of
$S$.  We define an allocation $\sigma'$ as follows.  Assume
w.l.o.g.\ that $s_{\max} \in S_1$. $\sigma_1' = \sigma_1 \cup \{g_i \mid s_i \in S_2\}$ and
$\sigma_2' = \sigma_2 \setminus \{g_i \mid s_i \in S_2\}$.  As
$|S_1| = |S_2| = \frac{m}{2} = k$ the allocation $\sigma'$ can
be reached by exactly $k$ reallocations.  It remains to show that
$\sigma'$ is an EF1 allocation.
 We have
$v_1(\sigma_1') = (m^2+2)B$ and
$v_1(\sigma_2') = (m^2+2)B + (s_{\max}+mB)$. By definition (and choice of $S_1$) the most valuable good in $\sigma_2'$ according to agent $a_1$ has a value of at least $s_{\max} + mB$, so there exists $g \in \sigma_2'$ such
that $v_1(\sigma_2'\setminus\{g\}) = (m^2+2)B \leq v_1(\sigma_1')$.  Hence
$a_1$ does not envy $a_2$ up to one good.  Furthermore,
$v_2(\sigma_1') = (m^2+2)B$ and
$v_2(\sigma_2') = (\frac{m^2}{2}+1)B$.  As $v_2(g_{\max}) = (\frac{m^2}{2}+1)B$ there exists $g \in \sigma_1'$ such
that $v_2(\sigma_1'\setminus\{g\}) = (\frac{m^2}{2}+1)B \leq v_2(\sigma_2')$.
Thus $a_2$ does not envy $a_1$ up to one good.  We can conclude that
$\sigma'$ is an EF1 allocation reachable by $k$ reallocations.

$(\Leftarrow)$ Let $\sigma'$ be an EF1 allocation reachable from
$\sigma$ by $k$ reallocations.  Note that, initially, in
$\sigma$ the agents valued the sets as follows:
$v_1(\sigma_1) = (\frac{m^2}{2}+1)B$ and
$v_1(\sigma_2) = 3(\frac{m^2}{2}+1)B+(s_{\max}+mB)$, while
$v_2(\sigma_1) = (\frac{m^2}{2}+1)B$ and
$v_2(\sigma_2) = (m^2+2)B$.  By definition the largest good in $\sigma_2$
according to agent $a_1$ satisfies $s_{\max} \le (m+1)B$.  Hence,
for any subset of $\frac{m}{2}-1$ goods from $\sigma_2'$, its value is
at most
\[
\left(\frac{m}{2}-1\right)(m+1)B
= \left(\frac{m^2}{2}-\frac{m}{2}-1\right)B
< \left(\frac{m^2}{2}+1\right)B.
\]
Therefore, any bundle consisting of only $\frac{m}{2}-1$ goods from
$\sigma_2$ has strictly smaller value than $\left(\frac{m^2}{2}+1\right)B$
for agent $a_1$.  As $v_1(\sigma_2) = 3(\frac{m^2}{2}+1)B + s_{\max}$,
if we only reallocate $k-1$ goods from $\sigma_2$ to $\sigma_1$, we have
\begin{align*}
v_1(\sigma_2') &> 3\left(\frac{m^2}{2}+1\right)B + (s_{\max}+mB) -
\left(\frac{m^2}{2}+1\right)B \\
&= 2\left(\frac{m^2}{2}+1\right)B + (s_{\max}+mB)
\end{align*}
and
\[
v_1(\sigma_1') < 2\left(\frac{m^2}{2}+1\right)B,
\]
which, since the maximum good in $\sigma_2'$ has a value of at most $s_{\max} + mB$, contradicts $v_1(\sigma_1') \ge v_1(\sigma_2'\setminus\{g\})$.  It follows that all goods must be reallocated from $\sigma_2$ to $\sigma_1$.

Thus there exists a set $S'$ of $k$ goods such that
$\sigma_1' = \sigma_1 \cup S'$ and $\sigma_2' = \sigma_2 \setminus S'$, with
$v_1(S') \ge \left(\frac{m^2}{2}+1\right)B$. Since $B \geq s_{\max}$  and $m>2$, we get $$v(g)
= B+\frac{s_{\max}+mB}{\frac{m^2}{2}+1}
\le B+\frac{(m+1)B}{\frac{m^2}{2}+1}
< 2B$$ for all $g \in G^\star$. Thus, $\frac{m}{2}\cdot v(g) < mB$, so all goods from $S'$ have to be
partition goods.  Note that all partition goods are equally valued by
the two agents, thus
$v_1(S') = v_2(S') \ge (\frac{m^2}{2}+1)B$.  Moreover,
\[
v_2(\sigma'_2) = (m^2+2)B - v_2(S') \le \left(\frac{m^2}{2}+1\right)B
\]
and
$v_2(\sigma'_1) = \left(\frac{m^2}{2}+1\right)B + v_2(S') \ge (m^2+2)B$.
As $\mathcal{\sigma'}$ is EF1, it must hold that
$v_2(\sigma'_2) \ge v_2(\sigma'_1\setminus \{g\})$ for some $g \in \sigma'_1$.
By construction $g_{\max}$ is the highest-valued good in $\sigma'_1$ and
$v_2(g_{\max}) = (\frac{m^2}{2}+1)B$.  Thus
\[
v_2(\sigma'_2) \ge v_2(\sigma'_1\setminus \{g_{\max}\}) = v_2(S') \ge
\left(\frac{m^2}{2}+1\right)B.
\]
From the two inequalities we get $v_2(\sigma'_2) = (\frac{m^2}{2}+1)B$.

Now let $S_1 = \{s_i \mid g_i \in S'\}$ and $S_2 = \{s_i \mid g_i \in S\setminus S'\}$.  By construction $v_2(g_i) = s_i + mB$, hence
\[
\sum S_1 = \sum S_2 = \left(\frac{m^2}{2}+1\right)B -
\frac{m^2}{2}B = B.
\]
Furthermore, as $S'$ is the reallocated set, $|S'| = k = \frac{m}{2}$,
so $|S_1| = \frac{m}{2}$, and since $S_2 = S \setminus S_1$, also
$|S_2| = \frac{m}{2}$.  This implies that $S_1, S_2$ form an
equal-cardinality partition of $S$.
\end{proof}

\begin{lemma}\label[lemma]{lem:ef-reduction}
There is a polynomial-time many-one reduction from
\textsc{Envy-free allocation} to \reallocEF{} which preserves the
valuation functions and the number of agents.
\end{lemma}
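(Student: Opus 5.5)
The reduction is the trivial one already sketched in the introduction. Given an instance $(A,G,v)$ of \textsc{Envy-free allocation}, which asks whether an envy-free allocation of $G$ to $A$ exists, we output the \reallocEF{} instance $(A,G,v,\sigma,k)$. Here $\sigma$ assigns every good to a single fixed agent, say $\sigma_1=G$ and $\sigma_i=\emptyset$ for $i\ge 2$, and $k=m=|G|$. The agents and valuation functions are copied verbatim, so both are preserved. The construction is clearly computable in polynomial (in fact linear) time.

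For correctness we use one observation: with $k=m$ the budget constraint is vacuous. For any allocation $\pi$ of $G$, the distance $d(\sigma,\pi)$ counts a subset of the goods, so $d(\sigma,\pi)\le m=k$.

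Both directions then follow at once.

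$(\Rightarrow)$ If an envy-free allocation $\pi$ of $G$ exists, it is reachable from $\sigma$ within budget $k$, so the constructed instance is a yes-instance.

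$(\Leftarrow)$ Any allocation witnessing the \reallocEF{} instance is in particular an envy-free allocation of $G$ under the same valuations. Hence the original instance is a yes-instance.

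There is no real obstacle here. The only points to state carefully are that the choice of $\sigma$ is irrelevant, and that $k=m$ is polynomially bounded, so the encoding is polynomial. One could equally choose $\sigma$ arbitrarily.
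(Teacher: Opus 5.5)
Your proposal is correct and matches the paper's proof exactly. The paper also sets $\sigma_1=G$, $\sigma_i=\emptyset$ for $i\neq 1$ and $k=m$, and observes that every allocation of $G$ lies within distance $m$ of $\sigma$, so the budget constraint is vacuous.
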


\begin{proof}
Given an instance $(G,A,v)$ with $G=\{g_1,\dots,g_m\}$, construct an
instance $I'=(A,G,v,\sigma,k)$ of \reallocEF{} with the same agents,
goods, and valuations. Set the initial allocation to $\sigma_1=G$ and
$\sigma_i=\emptyset$ for all $i\neq 1$, and set $k=m$.

Since there are only $m$ goods, any allocation of $G$ can be reached
from this initial allocation using at most $m$ reallocations. Hence,
$I'$ admits an envy-free allocation reachable within $k$ reallocations
if and only if the original instance admits an envy-free allocation.

The construction is polynomial and preserves both binary valuations and
the number of agents. Moreover, the same reduction preserves identical
additive valuations and the parameter $n$.
\end{proof}

\begin{theorem}\label{thm:np}
\realloc is NP-hard, even when all agents have binary valuations, for
$\beta\in \{EF,EFX\}$. Moreover, it is weakly NP-hard already for $n=2$ with identical additive valuations;
hence it is para-NP-hard parameterized by $n$. It is also $W[1]$-hard parameterized by the number $n$ of agents, even
when all agents have identical additive valuations encoded in unary.
\end{theorem}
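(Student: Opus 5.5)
The plan is to handle the EF variant through \Cref{lem:ef-reduction} and then move every hardness result to EFX through \Cref{lem:ef-efx}. That lemma keeps $n$, $v_{\max}$ and $k$ unchanged, keeps zero values binary, and the added dummy goods are valued $0$ by everyone, so identical valuations stay identical. Its size blow-up is $n(k+1)$ goods. This is polynomial in our reductions, since there $k=m$.

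\emph{Binary valuations.} Deciding whether an envy-free allocation exists is NP-hard already for binary additive valuations (Aziz et al.~\cite{AZIZ201571}). Applying \Cref{lem:ef-reduction} gives NP-hardness of \reallocEF{} for binary valuations, and \Cref{lem:ef-efx} then gives the same for \reallocEFX{}.

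\emph{Two agents, identical valuations.} For two agents with identical additive valuation $v$, an EF allocation exists if and only if the goods can be split into two bundles of equal value, i.e.\ a \textsc{Partition} instance. I would reduce \textsc{Partition} (weakly NP-hard) by setting $v(g_i)=s_i$, giving all goods to $a_1$ and taking $k=m$. This is exactly \Cref{lem:ef-reduction} applied to this EF instance. \Cref{lem:ef-efx} transfers the result to EFX with $n=2$ still fixed. Hardness at the constant $n=2$ gives para-NP-hardness with respect to $n$.

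\emph{$W[1]$-hardness in $n$ with unary values.} Here I would reduce from \textsc{Unary Bin Packing} parameterized by the number of bins, which is $W[1]$-hard (Jansen et al.). One can assume the total item size equals $n\cdot C$ for $n$ bins of capacity $C$, by padding with unit items; this keeps the encoding unary and the parameter unchanged. Create $n$ agents and one good per item, with identical value equal to the item size. With identical valuations, an EF allocation means every agent has equal value $C$, which is exactly a perfect packing. As before, give all goods to $a_1$, set $k=m$, and then apply \Cref{lem:ef-efx}. Both maps preserve $n$ and unary encodings, so they are parameterized reductions.

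The main obstacle I expect is checking that \Cref{lem:ef-efx} is polynomial in these uses. It adds $n(k+1)$ goods, so it is only polynomial when $k$ is polynomially bounded; that holds here because $k=m$. I also need the padding step in the bin packing reduction to preserve both the unary encoding and the parameter.
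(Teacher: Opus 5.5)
Your proposal is correct and follows essentially the paper's own proof. Both reduce envy-free allocation existence to \reallocEF{} via \cref{lem:ef-reduction}, using binary hardness from Aziz et al., \textsc{Partition} for $n=2$, and \textsc{Unary Bin Packing} for $W[1]$-hardness in $n$, and then transfer every claim to EFX via \cref{lem:ef-efx}. Your checks that the $n(k+1)$ dummy goods stay polynomial because $k=m$, and your unit-item padding in the bin-packing step, make explicit details the paper leaves implicit.
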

\begin{proof}
By \cref{lem:ef-reduction}, there is a polynomial-time many-one
reduction from \textsc{Envy-free allocation} to \reallocEF{} preserving
binary valuations, identical additive valuations, and the number of
agents. Since \textsc{Envy-free allocation} is NP-hard even under binary
valuations~\cite[Thm.\ 11]{AZIZ201571}, \reallocEF{}
is NP-hard under binary valuations.

Moreover, \textsc{Envy-free allocation} is weakly NP-hard already for $n=2$: with two identical
additive agents, a complete allocation is envy-free iff the goods can be
split into two bundles of equal total value, which is exactly
\textsc{Partition}. Hence \reallocEF{} is weakly NP-hard already for
$n=2$, and thus para-NP-hard parameterized by $n$.

Finally, since \textsc{Envy-free allocation} is $W[1]$-hard with
respect to $n$, as follows from reductions from Unary Bin
Packing~\cite{BBN16}, \reallocEF{} is $W[1]$-hard parameterized by $n$
even for identical additive valuations.

\noindent
By \cref{lem:ef-efx} the same holds for \reallocEFX.
\qedhere
\end{proof}

Thus, for EF and EFX, hardness already holds for two agents with
identical valuations.  For EF1, the situation is different: although
\reallocEFone{} is hard for two agents with general valuations by
\cref{thm:nptwo}, the identical two-agent case is tractable.  We prove
this next, before showing that hardness returns for three identical
agents.

\subsection{Identical valuations}

\smallskip
In contrast to \cref{thm:nptwo}, when the two agents have identical valuations \reallocEFone{} is tractable.

\begin{theorem}\label{thm:ef1-two-identical}
\reallocEFone{} for two agents with identical additive valuations is solvable in $\Oh(m\log m)$ time.
\end{theorem}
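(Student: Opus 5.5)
We assume w.l.o.g.\ that $v(\sigma_1)\le v(\sigma_2)$, and we write $D=v(\sigma_2)-v(\sigma_1)\ge 0$. With identical valuations, the poorer agent never envies, so EF1 of a two-agent allocation $(\pi_1,\pi_2)$ is equivalent to
\[
|v(\pi_1)-v(\pi_2)|\le \max_{g\in \pi_r} v(g),
\]
where $\pi_r$ is the richer bundle. If $\sigma$ already satisfies this, we answer yes with $0$ reallocations. Otherwise $a_1$ EF1-envies $a_2$. The plan is to (i) show that some optimal solution only moves goods from $\sigma_2$ to $\sigma_1$, (ii) characterize, for each fixed number $t$, whether some $t$-subset of $\sigma_2$ works, and (iii) find the minimum feasible $t$ in $\Oh(m\log m)$ time after sorting.

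For step (i), take an optimal target in which sets $X\subseteq\sigma_2$ and $Y\subseteq\sigma_1$ are exchanged. If $Y\neq\emptyset$, I would pair goods of $Y$ with goods of $X$ and try to cancel the pair, replacing it by a single transfer or by no transfer, without destroying EF1 and without increasing the count. This relies on the fact that only the difference of bundle values and the largest good in the richer bundle matter. I expect this exchange argument to be the main obstacle. The delicate case is when a swap is used to put a large good into the richer bundle so as to raise its ``up to one good'' slack. I would try first to show that moving that large good directly, or moving one fewer good from $X$, achieves the same effect. If this fails in general, the fallback is to enumerate $|Y|\in\{0,1\}$ and argue that $|Y|\ge 2$ is never needed.

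For step (ii), with only $X\subseteq\sigma_2$ moved, $|X|=t$ and $x=v(X)$, the new difference is $D-2x$. The allocation is EF1 if either $x\le D/2$ and $D-2x\le \max(\sigma_2\setminus X)$, or $x> D/2$ and $2x-D\le\max(\sigma_1\cup X)$. Sort $\sigma_2$ in nonincreasing order $h_1\ge h_2\ge\cdots$. I would prove an intermediate-value lemma: the sums of $t$-subsets can be walked from the $t$ smallest goods to the $t$ largest by single exchanges. Each exchange changes the sum by at most the value of a good that is present in the relevant richer bundle. Hence, if the largest $t$-sum reaches at least $(D-h_1)/2$, where we may keep $h_1$ in the richer side or move it, some $t$-subset lands inside the EF1 window. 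The point is that the window $[\,(D-M)/2,(D+M')/2\,]$ has width at least the step size. Feasibility for $t$ then reduces to a comparison involving prefix sums of the sorted list, and the predicate is monotone in $t$ (up to $t\le|\sigma_2|$).

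For step (iii), sorting costs $\Oh(m\log m)$ and prefix sums cost $\Oh(m)$. We then scan $t=1,2,\dots$ for the first feasible value and compare it with $k$. This gives the claimed $\Oh(m\log m)$ running time.
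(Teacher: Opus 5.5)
\textbf{Gap in step (i).} Step (i) is the real content of the theorem, and you give only an intention plus a fallback, neither of which is proved. The pairing route also needs a case analysis you have not supplied, because simply cancelling a pair $x\in X$, $y\in Y$ can destroy EF1. Take $\sigma_1=\{6\}$ and let $\sigma_2$ be six goods of value $2$. Moving five $2$'s to $a_1$ and the $6$ to $a_2$ is EF1 (values $10$ versus $8$), but undoing one such pair gives $14$ versus $4$.

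What is missing is a global lower bound that holds for every EF1 target, whatever $Y$ is:
\[
D\le 2v(X)+\max(\sigma_2\setminus X)\qquad(\max\emptyset:=0).
\]
The proof is a short case analysis:
\begin{itemize}
\item If $a_1$ ends strictly richer, then $2v(X)>D+2v(Y)\ge D$.
\item Otherwise EF1 gives $D-2v(X)+2v(Y)\le\max\bigl((\sigma_2\setminus X)\cup Y\bigr)$. If this maximum is attained in $Y$, it is at most $v(Y)$, so $D\le 2v(X)$. If not, simply drop the term $2v(Y)\ge 0$.
\end{itemize}
In words: a good sent from the poorer to the richer bundle adds twice its value to the gap but at most once to the slack. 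This makes precise the paper's remark that such transfers cannot help.

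\textbf{Completing the argument.} Combine the bound with the exchange inequality
\[
2v(T)+\max(\sigma_2\setminus T)\le 2P_t+h_{t+1}\qquad\text{for all } T\subseteq\sigma_2,\ |T|=t,
\]
where $P_t=h_1+\dots+h_t$ and $h_{t+1}:=0$ when $t=|\sigma_2|$. The paper proves this inequality, and it is what your step (ii) actually needs. You get $2P_{|X|}+h_{|X|+1}\ge D$ for every EF1 target. Hence the least $t^*$ with $2P_{t^*}+h_{t^*+1}\ge D$ satisfies $t^*\le|X|\le|X|+|Y|$.

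The top-$t^*$ prefix is itself EF1. Either the new gap lies in $[0,h_{t^*+1}]$, or, by minimality of $t^*$, it has just changed sign and overshoots by less than $h_{t^*}$, a good now held by $a_1$. So every optimal solution has $Y=\emptyset$, and the optimum is found by exactly the paper's greedy algorithm. Your intermediate-value lemma over arbitrary $t$-subsets is then unnecessary.

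\textbf{Errors in step (ii).} Step (ii) is also wrong as written in two places.
\begin{itemize}
\item The threshold must involve $h_{t+1}$, not $h_1$. Take $\sigma_1=\emptyset$ and let $\sigma_2$ be one good of value $10$ and fifteen of value $1$, so $D=25$. Your test accepts $t=1$, yet neither single transfer yields EF1.
\item The predicate ``some $t$-subset works'' is not monotone in $t$. For $\sigma_1=\emptyset$ and $\sigma_2=\{10,10,10,1\}$, $t=2$ works but $t=3$ does not. Your scan for the first feasible $t$ does not need monotonicity, but it does need the correct test.
\end{itemize}
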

\begin{proof}
We show that Algorithm~\ref{alg:ef1-two-agents} computes an EF1 allocation
reachable from $\sigma$ within $k$ reallocations, or correctly reports that no such allocation exists.


\begin{algorithm}[t]
\caption{EF1-Reallocation for Two Agents with Identical Valuations}
\label{alg:ef1-two-agents}
\begin{algorithmic}[1]
\REQUIRE Identical additive valuation $v$, initial allocation
  $\sigma=(\sigma_1,\sigma_2)$, budget $k$.
\ENSURE An EF1 allocation reachable from $\sigma$ by at most $k$
  reallocations, or \textsc{None}.
\STATE $\ell \leftarrow 0$
\WHILE{$(\sigma_1,\sigma_2)$ is not EF1 \textbf{and} $\ell < k$}
  \STATE $i \leftarrow \arg\max_{j\in\{1,2\}} v(\sigma_j)$
  \STATE $g^* \leftarrow \arg\max_{g\in\sigma_i} v(g)$
  \STATE $\sigma_i \leftarrow \sigma_i\setminus\{g^*\}$
  \STATE $\sigma_{3-i} \leftarrow \sigma_{3-i}\cup\{g^*\}$
  \STATE $\ell \leftarrow \ell+1$
\ENDWHILE
\IF{$(\sigma_1,\sigma_2)$ is EF1}
  \RETURN $(\sigma_1,\sigma_2)$
\ELSE
  \RETURN \textsc{None}
\ENDIF
\end{algorithmic}
\end{algorithm}

Let $A=\{a_1,a_2\}$ and $\sigma$ be the initial allocation. If $\sigma$ is EF1, then Algorithm~\ref{alg:ef1-two-agents} returns it immediately. Hence, correctness is immediate in this case.

Now, assume that the current allocation is not EF1. By symmetry, it
suffices to consider the case $v(\sigma_1) > v(\sigma_2)$. Such an allocation is EF1 if and only if
\begin{equation}\label{eq:ef1-two}
  v(\sigma_1)-v(\sigma_2)
  \leq
  \max_{g\in\sigma_1} v(g).
\end{equation} Let $\Delta = v(\sigma_1)-v(\sigma_2)$.
Then $\Delta>\max_{g\in\sigma_1}v(g)$.

First, moving a good $g\in\sigma_2$ to $\sigma_1$ changes the gap from $\Delta$ to $\Delta+2v(g)$, so it cannot help satisfy condition~\eqref{eq:ef1-two}. Hence, while agent $a_1$ has weakly larger value, we only need to consider reallocations from $\sigma_1$ to $\sigma_2$.

Let $g^*\in\arg\max_{g\in\sigma_1}v(g)$. Moving a good $g\in\sigma_1$ to $\sigma_2$ changes the gap to $\Delta-2v(g)$.
Thus, among all such reallocations, moving $g^*$ minimizes the resulting gap.
If $\Delta-2v(g^*)<0$, then after moving $g^*$ agent $a_2$ has larger value. However, the new gap is $v(\sigma_2\cup\{g^*\}) - v(\sigma_1\setminus\{g^*\})
  =
  2v(g^*)-\Delta
  <
  v(g^*)$,
where the strict inequality follows from
$\Delta>\max_{g\in\sigma_1}v(g)=v(g^*)$. Hence the resulting allocation is EF1.
Otherwise, $\Delta-2v(g^*)\geq 0$, so agent $a_1$ still has weakly larger value, and the greedy reallocation leaves the smallest possible remaining gap among all one-good transfers from $\sigma_1$ to $\sigma_2$.

Now consider any shortest sequence of reallocations that reaches an EF1
allocation. We claim that its first step can be chosen to be the greedy step.
Let the goods of $\sigma_1$ be ordered by non-increasing value, $v(g_1)\ge \cdots \ge v(g_{m'})$,
so that the greedy step moves $g_1$. More generally, after $t$ greedy steps,
provided EF1 has not yet been reached, the goods moved from $\sigma_1$ to
$\sigma_2$ are precisely $g_1,\dots,g_t$.
We use the following exchange observation. For every set $T\subsetneq \sigma_1$ with $|T|=t$, we have
\begin{equation}\label{eq:greedy-exchange}
  2v(T)+\max_{g\in \sigma_1\setminus T}v(g)
  \le
  2\sum_{q=1}^t v(g_q)+v(g_{t+1}).
\end{equation}
Indeed, if the maximum good left in $\sigma_1\setminus T$ is some
$g_j$ with $j>t$, then $v(T)\le \sum_{q=1}^t v(g_q)$ and $v(g_j)\le v(g_{t+1})$. If $j\le t$, then $T$ did not include $g_j$, so the largest possible value of $T$ is obtained by taking the first $t+1$
goods except $g_j$; hence
\[
  2v(T)+v(g_j)
  \le
  2\sum_{q=1}^{t+1}v(g_q)-v(g_j)
  \le
  2\sum_{q=1}^t v(g_q)+v(g_{t+1}),
\]
because $v(g_j)\ge v(g_{t+1})$. This proves
\eqref{eq:greedy-exchange}.

Suppose some sequence reaches EF1 after $t$ reallocations. Until the final
step, the valuation of $\sigma_1$ must remain weakly higher; otherwise, as shown above, the allocation would already be EF1 at the first step where $\sigma_2$ becomes more valuable.
Thus the 
sequence moves a set
$T\subseteq \sigma_1$ of $t$ goods from $\sigma_1$ to $\sigma_2$.
If this 
makes $\sigma_2$ more valuable in the final step, then $\Delta - 2v(T) < 0$. Since the greedy sequence moves goods of total value at least $v(T)$ in its first $t$ steps, the greedy sequence also makes $\sigma_2$ more valuable by step $t$. At the first greedy step where $\sigma_2$ becomes more valuable, the allocation is EF1 by the argument above. Hence greedy reaches EF1 within $t$ steps.

Otherwise, after moving $T$, $\sigma_1$ is still more valuable. Since the
resulting allocation is EF1, we have $ \Delta
  \le
  2v(T)+\max_{g\in \sigma_1\setminus T}v(g)$.
By \eqref{eq:greedy-exchange}, $\Delta
  \le
  2\sum_{q=1}^t v(g_q)+v(g_{t+1})$.
But after $t$ greedy steps, the remaining gap is $\Delta - 2\sum_{q=1}^t v(g_q)$, and the largest remaining good in $\sigma_1$ is $g_{t+1}$. Therefore the
greedy allocation after $t$ steps satisfies $\Delta - 2\sum_{q=1}^t v(g_q)
  \le
  v(g_{t+1})$, and is EF1.

Hence, whenever an EF1 allocation is reachable in $t$ reallocations, the
greedy sequence also reaches EF1 within $t$ reallocations. Thus,
Algorithm~\ref{alg:ef1-two-agents} uses the minimum number of reallocations,
returning an EF1 allocation if one is reachable within $k$ reallocations
and \textsc{None} otherwise.


\smallskip\noindent\emph{Running time.}
Sort the goods by decreasing value and maintain the bundles as
max-heaps with their current total values. Each iteration does one
heap extraction, one heap insertion, and constant-time updates of the
bundle values. The EF1 condition is checked in $\Oh(1)$ time from
the bundle values and heap maxima. Since at most $m$ goods are
reallocated, the total running time is $\Oh(m\log m)$.
\qedhere
\end{proof}
However for three identical agents, the problem becomes intractable.

\begin{theorem}\label{thm:npidentical-n3}
\reallocEFone{} is weakly \textup{NP}-hard even when all agents have identical additive valuations and $n = 3$.
\end{theorem}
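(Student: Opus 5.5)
We reduce from \textsc{Partition} (or, if the counting argument is easier to control, from \textsc{Equal cardinality partition} as in \cref{thm:nptwo}). The idea is to make two agents' initial bundles empty, so that any EF1 allocation must give them almost equal values using only goods taken from the third agent, and to let a budget constraint force exactly how many goods move.

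Let $S=\{s_1,\dots,s_m\}$ with $\sum s_i=2B$. Agents $a_1,a_2$ start with empty bundles. Agent $a_3$ holds a partition good $g_i$ for each $s_i$, with value $v(g_i)=s_i+M$ for a large offset $M$ (say $M=mB$). Agent $a_3$ also holds one big anchor good $h$ of value $V=mM+2B$, chosen so that $h$ alone is roughly as valuable as all partition goods together. Set $k=m$.

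For the forward direction, a partition $S_1,S_2$ gives the reallocation that moves the goods of $S_1$ to $a_1$ and those of $S_2$ to $a_2$. Then $a_1$ and $a_2$ each receive value $|S_j|M+B$, while $a_3$ keeps $h$.
\begin{itemize}
\item Envy between $a_1$ and $a_2$: with equal cardinality the two values are equal, so there is no envy.
\item Envy from $a_1,a_2$ towards $a_3$: this is removed by discarding $h$.
\item Envy from $a_3$ towards $a_1,a_2$: $a_3$ values its own bundle at $V$, which exceeds the others' values, so there is none.
\end{itemize}
For this reason I would use \textsc{Equal cardinality partition} and $k=m$. The large offset $M$ then makes the cardinalities visible in the values.

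The reverse direction is the main obstacle, and it runs in four steps.
\begin{enumerate}
\item Agent $a_3$ keeps $h$. Moving $h$ uses one unit of budget, and I expect a case analysis to show that whoever receives $h$ makes the remaining two bundles impossible to balance. Calibrating $V$ so that this case fails is the delicate part. If it does not work cleanly, I would replace $h$ by several copies, or give each agent a few initial dummy goods.
\item Since $a_3$ keeps $h$, the partition goods all lie in $\sigma_1'\cup\sigma_2'$ or with $a_3$. If $a_3$ kept any partition good, then $a_3$'s bundle minus $h$ would still be large, and EF1 from $a_1$ or $a_2$ towards $a_3$ would fail. Ruling this out needs $V$ tuned to be exactly large enough. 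Alternatively, I would simply lower $k$ and let counting force all goods to move.
\item EF1 between $a_1$ and $a_2$ gives $|v(\sigma_1')-v(\sigma_2')|\le \max$ of a single good, which is about $M+s_{\max}$. This is not exact equality.
\item To obtain exact equality, I would copy the trick of \cref{thm:nptwo}: add an extra good for each of $a_1,a_2$ that absorbs the one-good slack. For example, give $a_1$ a pre-owned good $p_1$ and $a_2$ a pre-owned good $p_2$, both of value $P$ huge, so that the EF1 slack is exactly $P$ plus the value of the other bundle's goods. Tuning $V$ so that EF1 towards $a_3$ forces both sums to be at least $B$ then yields both equal to $B$. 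Together with the cardinality constraint from the offsets, this recovers an equal partition.
\end{enumerate}

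The numbers are polynomial in $B$ and $m$ but binary-encoded, which gives weak NP-hardness.
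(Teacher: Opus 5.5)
\textbf{Verdict: there is a genuine gap.} The instance you build is a yes-instance regardless of the input, so no tuning of $V$ can complete the reverse direction.

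Take even $m\ge 2$ and move an arbitrary $\frac m2$ partition goods to $a_1$ and the remaining $\frac m2$ to $a_2$. This uses exactly $k=m$ moves. Let $\Sigma_j$ be the sum of the $s_i$ whose goods $a_j$ receives.
\begin{itemize}
\item Between $a_1$ and $a_2$: their values differ by $|\Sigma_1-\Sigma_2|\le 2B\le M$, and every good is worth at least $M$, so the two are EF1 towards each other.
\item Towards $a_3$: EF1 is vacuous, because $\sigma_3'=\{h\}$ is empty once $h$ is removed.
\item From $a_3$: it envies nobody, since $V=mM+2B$ exceeds $\frac m2 M+2B$.
\end{itemize}
Your step~1 is also false. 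Moving $h$ to $a_1$ and splitting the partition goods evenly between $a_2$ and $a_3$ is EF1 with only $\frac m2+1$ moves.

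The root cause is that, under identical valuations, EF1 between two agents only bounds their difference by one good, roughly $M$. That cannot detect an $O(B)$ imbalance. An agent holding a single large good imposes no lower bound on anyone.

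Step~4 does not repair this. With pre-owned goods $p_1,p_2$ of huge value $P$, agent $a_1$ may discard $p_2$ when comparing with $a_2$. EF1 between them then becomes trivially true rather than exact. The trick of \cref{thm:nptwo} you want to copy relies on the two agents valuing $G^\star$ differently, and identical valuations offer no such asymmetry.

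The missing idea is to force exact balance through EF1 \emph{towards the third agent}. For this, that agent's final bundle must consist of many goods of small value. Then the one-good slack is only $B$, and moving a few of these goods within the budget cannot shed enough value.

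The paper gives the third agent all goods initially: partition goods of value $s_i+mB$ and $\frac{m^2}{2}+2$ helper goods of value $B$, with $k=m$.
\begin{itemize}
\item Suppose some partition good stayed with the third agent. Its value would be at least $(\frac{m^2}{2}+m+1)B$.
\item On the other hand, every good is worth at most $(m+1)B$. EF1 from both other agents together with the total value $(\frac{3m^2}{2}+4)B$ caps the third agent at $(\frac{m^2}{2}+\frac{2m}{3}+2)B$. This contradicts the previous bound for $m>3$.
\item Hence exactly the partition goods move, and the third agent keeps value $(\frac{m^2}{2}+2)B$.
\item EF1 towards it then gives each of $a_1,a_2$ value at least $(\frac{m^2}{2}+1)B$. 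This is exactly half of the partition total $(m^2+2)B$, so both bounds are tight.
\item The offset $mB$ then forces exactly $\frac m2$ goods with sum $B$ on each side.
\end{itemize}
Your fallback of replacing $h$ by several copies points in this direction. It only works if the copies are numerous and of value about $B$, and it must replace step~4 as the central argument rather than serve as a patch.
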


\begin{proof} We again reduce from \textsc{Equal Cardinality Partition}
\cite{DBLP:books/fm/GareyJ79}. The construction is inspired by the proof of
\cref{thm:nptwo}, but slightly modified.

We construct an instance $I'=(A,G,v,\sigma,k)$ of \reallocEFone{} as follows. We may assume w.l.o.g. that the largest integer $s_{\max}\le B$ and $m>3$, since otherwise the instance is trivial. 
Let $A = \{s,a_1,a_2\}$ denote the set of agents.  We create the set of
goods $G$ by adding a good $x_i$ with
$v(x_i)=s_i+mB$ for each integer $s_i \in S$, where we denote by $s_{\max} = \max \{s_i \in S\}$. We further add a set $G^\star$ of $\frac{m^2}{2}+2$ goods with valuations 
$v(g)=B$ for all  $g\in G^\star$.
We set the initial allocation as follows:
$\sigma_s = G$ and
$\sigma_i = \emptyset$ for $i \in \{1,2\}$.
Finally set $k = m$. The construction can be performed in polynomial time. 

We next show the equivalence of the two instances.

$(\Rightarrow)$ Let $S_1,S_2$ be an equal-cardinality partition of
$S$. Let $X_1=\{x_i\mid s_i\in S_1\}$ and
$X_2=\{x_i\mid s_i\in S_2\}$. We define an allocation $\sigma'$ by 
$\sigma_s'=G^\star$, $\sigma_1'=X_1$, and $\sigma_2'=X_2$.
Since all $m$ partition goods are reallocated and no good of $G^\star$ is
moved, $\sigma'$ can be reached by exactly $k=m$ reallocations.
Because $|S_1|=|S_2|=\frac{m}{2}$ and both sets have total value $B$, we have
\[
v(\sigma_1')=v(\sigma_2')
=\frac{m}{2}\cdot mB+B
=\left(\frac{m^2}{2}+1\right)B .
\]
Moreover,
\[
v(\sigma_s')=|G^\star|B
=\left(\frac{m^2}{2}+2\right)B .
\]
Thus agents $a_1$ and $a_2$ do not envy each other. Agent $s$ does not envy
either $a_1$ or $a_2$, since $v(\sigma_s')>v(\sigma_i')$ for
$i\in\{1,2\}$. Finally, each agent $a_i$ may envy $s$, but removing any good
$g\in G^\star=\sigma_s'$ gives $v(\sigma_s'\setminus\{g\})
=\left(\frac{m^2}{2}+1\right)B
=v(\sigma_i')$.
Hence $\sigma'$ is EF1.

$(\Leftarrow)$ Let $\sigma'$ be an EF1 allocation with
$d(\sigma,\sigma')\le k=m$. We first show that all partition goods
$x_1,\dots,x_m$ must be reallocated away from agent $s$.

Let $M=|G^\star|=\frac{m^2}{2}+2$. Since at most $m$ goods are reallocated
and initially all goods belong to $s$, the two agents $a_1,a_2$ together
receive at most $m$ goods.

Suppose, for contradiction, that some partition good remains with $s$.
Then, even in the case minimizing $s$'s value, agent $s$ keeps at least one
partition good and at least $M-1$ goods from $G^\star$. Hence
\[
v(\sigma_s')
\ge (M-1)B+mB
=
\left(\frac{m^2}{2}+m+1\right)B .
\]
On the other hand, every good has value at most 
$mB+s_{\max}\le (m+1)B$,
because $\sum_i s_i=2B$. Since $\sigma'$ is EF1, for each
$i\in\{1,2\}$ there exists some good $g_i\in\sigma_s'$ such that
\[
v(\sigma_i')\ge v(\sigma_s'\setminus\{g_i\})
\ge v(\sigma_s')-(m+1)B .
\]
Therefore
\[
v(\sigma_1')+v(\sigma_2')
\ge 2v(\sigma_s')-2(m+1)B .
\]
Since the total value of all goods is
\[
v(G)
=
\sum_{i=1}^m(s_i+mB)+MB
=
2B+m^2B+\left(\frac{m^2}{2}+2\right)B
=
\left(\frac{3m^2}{2}+4\right)B,
\]
we also have
\[
v(\sigma_1')+v(\sigma_2')=v(G)-v(\sigma_s').
\]
Combining the two inequalities gives
\[
v(G)-v(\sigma_s')
\ge 2v(\sigma_s')-2(m+1)B,
\]
and therefore
\[
v(\sigma_s')
\le
\frac{v(G)+2(m+1)B}{3}
=
\left(\frac{m^2}{2}+\frac{2m}{3}+2\right)B .
\]
But this contradicts
\[
v(\sigma_s')
\ge
\left(\frac{m^2}{2}+m+1\right)B
\]
for every $m>3$.
Thus all partition goods must be reallocated away from $s$.

Since there are exactly $m$ partition goods and at most $m$ reallocations are
allowed, it follows that precisely the goods $x_1,\dots,x_m$ are reallocated,
and no good of $G^\star$ is moved. Hence
$\sigma_s'=G^\star$ and 
$v(\sigma_s')=\left(\frac{m^2}{2}+2\right)B$.

Now consider any agent $a_i$, where $i\in\{1,2\}$. Since $\sigma'$ is EF1,
agent $a_i$ must not envy $s$ up to one good. All goods held by $s$ have value
$B$, so
\[
v(\sigma_i')
\ge
v(\sigma_s')-B
=
\left(\frac{m^2}{2}+1\right)B .
\]
Let $S_i\subseteq S$ be the set of integers corresponding to the partition
goods assigned to $a_i$. Then
$v(\sigma_i')
=
|S_i|\,mB+\sum_{s_j\in S_i}s_j $.
Because $\sum_{j=1}^m s_j=2B$, we have
$\sum_{s_j\in S_i}s_j\le 2B$ .
If $|S_i|\le \frac m2-1$, then
\[
v(\sigma_i')
\le
\left(\frac m2-1\right)mB+2B
=
\left(\frac{m^2}{2}-m+2\right)B
<
\left(\frac{m^2}{2}+1\right)B,
\]
a contradiction. Hence each agent receives at least $\frac{m}{2}$ partition goods.
Since there are exactly $m$ partition goods in total, both agents receive
exactly $\frac{m}{2}$ of them.

Finally, for each $i\in\{1,2\}$ we have
\[
v(\sigma_i')
=
\frac m2\cdot mB+\sum_{s_j\in S_i}s_j
\ge
\left(\frac{m^2}{2}+1\right)B,
\]
which implies
$\sum_{s_j\in S_i}s_j\ge B $.
Since the two sets $S_1,S_2$ partition $S$ and the total sum is $2B$, we get
$\sum_{s_j\in S_1}s_j=\sum_{s_j\in S_2}s_j=B $.
Thus $S_1,S_2$ form an equal-cardinality partition of $S$.
\end{proof}

\smallskip
Finally, we show that \reallocEFone{} is strongly \textup{NP}-complete even for identical valuations. For this we reduce from \textsc{3-Partition}. Here, we are given $3r$ positive integers $a_1,\dots,a_{3r}$ with $\sum_{j=1}^{3r} a_j = rB$ and $B/4 < a_j < B/2$ for all $j$; the question is whether
$\{a_1,\dots,a_{3r}\}$ can be partitioned into $r$ triples each summing to $B$. 

\begin{theorem}\label{thm:npidentical}
\reallocEFone{} is \textup{NP}-complete even when all agents have
identical additive valuations.
\end{theorem}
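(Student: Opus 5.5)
Membership in NP is immediate: a target allocation $\sigma'$ is a certificate, and both $d(\sigma,\sigma')\le k$ and the EF1 property can be checked in polynomial time. For hardness we reduce from \textsc{3-Partition}, which is strongly NP-hard, so all numbers may be assumed polynomially bounded and the hardness is in the strong sense. The construction generalises the one in the proof of \cref{thm:npidentical-n3}. We use a ``source'' agent $s$ together with $r$ agents $a_1,\dots,a_r$, all with one identical valuation $v$. For each integer $a_j$ we create a partition good $x_j$ with $v(x_j)=a_j+CB$, where $C$ is a large scaling constant (say $C=3r$). This makes every triple of partition goods worth about $3CB$, while any two of them are worth only about $2CB$. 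We also add a set $G^\star$ of filler goods, each of value $B$, with $|G^\star|=3C+2$. Initially $\sigma_s=G$ and all other bundles are empty, and we set $k=3r$.

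For the forward direction, a 3-partition into triples $T_1,\dots,T_r$ yields the allocation with $\sigma'_s=G^\star$ and $\sigma'_{a_i}=\{x_j : j\in T_i\}$, which uses exactly $3r$ reallocations. Each $a_i$ then has value $3CB+B$, and $s$ has value $(3C+2)B$. The agents $a_i$ do not envy one another, and $s$ envies nobody. Removing one filler from $\sigma'_s$ leaves value $(3C+1)B=v(\sigma'_{a_i})$, so the allocation is EF1.

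For the reverse direction we follow the same two steps as in \cref{thm:npidentical-n3}.

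\emph{Step 1 (all partition goods leave $s$).} At most $3r$ goods leave $s$, so the $r$ agents $a_i$ together hold at most $3r$ goods. EF1 towards $s$ requires $v(\sigma'_{a_i})\ge v(\sigma'_s)-\max_{g} v(g)$ for every $i$. Summing over the $r$ agents gives
\[
v(G)-v(\sigma'_s)\ge r\,v(\sigma'_s)-r(C+1)B ,
\]
which is an upper bound on $v(\sigma'_s)$. Now suppose some partition good stays with $s$. Then $s$ keeps at least one good of value about $CB$ plus at least $|G^\star|-(3r-1)$ fillers. The delicate point is choosing $C$ and $|G^\star|$ so that this lower bound on $v(\sigma'_s)$ beats the upper bound. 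Since there are many agents here, $|G^\star|$ may have to scale with $r$. I would first try making fillers numerous and cheap relative to partition goods; if the arithmetic fails, I would give $s$ additional large ``anchor'' goods that it must keep. Once Step 1 holds, the budget forces exactly the partition goods to move, so $\sigma'_s=G^\star$.

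\emph{Step 2 (each agent gets a triple summing to $B$).} With $\sigma'_s=G^\star$, EF1 towards $s$ forces $v(\sigma'_{a_i})\ge(3C+1)B$ for every $i$. An agent with at most two partition goods has value at most $2CB+rB<(3C+1)B$ when $C>r$. Hence every agent holds at least three partition goods, and since there are exactly $3r$ of them, each holds exactly three. Each triple then has value $3CB+\sum a_j\ge(3C+1)B$, so every triple sums to at least $B$. Because the total is $rB$, every triple sums to exactly $B$, which is a valid 3-partition.

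The main obstacle is the parameter tuning in Step 1. We must ensure that keeping even one partition good at $s$ makes $s$ too valuable, while the filler-only bundle of $s$ exactly matches the threshold $3CB+B$ used in Step 2. These two requirements pull $|G^\star|$ in different directions, so fixing the filler count relative to $C$ and $r$ is the part that needs the most care.
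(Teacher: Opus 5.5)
Your forward direction, Step~2 and the NP-membership argument are correct, but Step~1, the core of the reverse direction, is not established. You leave the parameter tuning open and offer anchor goods as a fallback, and the bound you sketch cannot close it. Keeping one partition good plus $|G^\star|-(3r-1)$ fillers gives only $v(\sigma'_s)\ge(4C+3-3r)B=(9r+3)B$ for $C=3r$. Your averaged EF1 inequality $(r+1)\,v(\sigma'_s)\le v(G)+r(C+1)B=(4rC+3C+2r+2)B$ still allows $v(\sigma'_s)$ up to $(12r-1+\tfrac{3}{r+1})B$, so no contradiction follows. The slack comes from combining two worst cases that cannot occur together: $3r-1$ fillers leaving $s$ and $3r-1$ partition goods leaving $s$, which together exceed the budget.

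The missing step is to bound the value that leaves $s$ jointly. Suppose some partition good stays with $s$. Then at most $3r$ goods leave, at most $3r-1$ of them partition goods. A partition good is worth more than a filler, and the remaining $a_j$ is positive, so the departing value is less than $((3r-1)C+r+1)B$. With $v(G)=(3rC+3C+r+2)B$ this gives $v(\sigma'_s)>(4C+1)B$. Combined with your upper bound, this is contradictory as soon as $4rC+3C+2r+2\le(r+1)(4C+1)$, i.e.\ $C\ge r+1$, which $C=3r$ satisfies. So no anchors are needed, and there is no tension on $|G^\star|$: it is fixed at $3C+2$ by the forward direction and Step~2, while $C$ is the free parameter.

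With this fix your reduction works, and it differs from the paper's. The paper scales multiplicatively, $v(x_j)=\lambda a_j$ with $\lambda=3r+1$, and adds $\lambda B+1$ unit-value helpers. This turns Step~1 into a counting argument: $s$ always keeps at least $\lambda B+1$ goods, so every target agent needs value at least $\lambda B$, which uses up all partition goods. The paper then uses $B/4<a_j<B/2$ to obtain triples. Your additive offset $CB$ forces triples on its own, and your instance has only $O(r)$ goods, independent of $B$.
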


\begin{proof}
Given an instance $(r, B, a_1,\dots,a_{3r})$ of \textsc{3-Partition}, we construct an instance
$(\sigma, A, G, k)$ of \reallocEFone{} as follows. Set $\lambda = 3r+1$. Let
$A = \{s, b_1, \dots, b_r\}$ be the set of agents, where $s$ is a
source agent and $b_1, \dots, b_r$ are target agents. The good set
$G$ consists of $3r$ partition goods $x_1, \dots, x_{3r}$ with
$v(x_j) = \lambda a_j$, and $\lambda B+1$ helper goods
$h_1, \dots, h_{\lambda B+1}$, each of value $1$. All agents have the same
valuation $v$. We define the initial allocation $\sigma$ by
$\sigma_s = G$ and $\sigma_{b_i} = \emptyset$ for all $i \in [r]$, and set
the budget to $k = 3r$.  Since \textsc{3-Partition} is strongly \textup{NP}-complete, the numbers $a_j$ and $B$ may be assumed polynomially bounded in the input size, making the number of helper goods $\lambda B+1$ polynomial and the construction polynomial. 

We claim that the \textsc{3-Partition} instance is a YES-instance if and only if the \reallocEFone{} instance has an EF1 allocation reachable within $k$ reallocations.

$(\Rightarrow)$ Suppose that a 3-partition exists, that is,
$\{a_1,\dots,a_{3r}\} = P_1 \uplus \cdots \uplus P_r$, with $|P_i|=3$ for each $i\in[r]$,  with
$\sum_{j \in P_i} a_j = B$ for each $i \in [r]$. Define $\sigma'$ by
$\sigma_{b_i}' = \{x_j \mid j \in P_i\}$ and
$\sigma_s' = \{h_1,\dots,h_{\lambda B+1}\}$. Then
$v(\sigma_s')=\lambda B+1$ and $v(\sigma_{b_i}')=\lambda B$ for every
$i \in [r]$. Hence the target agents do not envy each other, and each
target agent envies $s$ only up to one helper good. Moreover, exactly
$3r = k$ goods change their owner, so $d(\sigma,\sigma') = k$.

$(\Leftarrow)$ Let $\sigma'$ be an EF1 allocation with
$d(\sigma,\sigma') \le k = 3r$. After at most $3r$ reallocations, agent $s$
still owns at least $\lambda B+1$ goods (including helper goods and partition goods). Therefore, $v\bigl(\sigma_s' \setminus \{\argmax_{g \in \sigma_s'} v(g)\}\bigr)
\geq \lambda B$.
Since $\sigma'$ is EF1, every target agent must receive value at least
$\lambda B$. Thus the total value received by all target agents is at least
$r\lambda B$.

This is possible only if all $3r$ partition goods are moved to the target
agents. To see this, suppose $c\ge 1$ helper goods move to target agents. Then at most $3r-c\le 3r-1$ partition goods move, so by pigeonhole some $b_i$ receives at most $2$ partition goods, giving $v(\sigma'_{b_i})\le 2\lambda(B/2-1)+c = \lambda B - 2\lambda + c$. Since $s$ retains at least one partition good $x_j$, the EF1 threshold for $b_i$ is $v(\sigma_s'\setminus\{x_j\})\ge \lambda B - c$. EF1 then requires $\lambda B-2\lambda+c\ge\lambda B-c$, i.e.\ $c\ge\lambda=3r+1$, contradicting $c\le 3r$. Hence no helper good can be
moved to a target agent. Moreover, the total value of all partition goods is
$r\lambda B$, so each target agent must receive value exactly $\lambda B$. Since each target agent needs value at least $\lambda B$ (from the EF1 condition above) and the total value of all partition goods is exactly $r\lambda B$, the sum constraint forces each $b_i$ to receive exactly $\lambda B$.
As every partition good has value $\lambda a_j$, the goods assigned to each
target agent correspond to a subset of the original numbers summing to $B$. Since $B/4 < a_j < B/2$ for all $j$, any two goods sum to less than $B$ and any four goods sum to more than $B$, so each group must contain exactly $3$ goods.
Hence these subsets form a valid 3-partition.

\end{proof}

\subsection{Binary valuations}

Next, we show that \realloc{} is W[2]-hard for the parameter~$k$ when agents have binary valuations, for $\beta \in \{EF,EFX,EF1\}$.  For this we use a reduction from
\textsc{Hitting Set}.  A \textsc{Hitting Set} instance consists
of a hypergraph~\(\mathcal{H} = (V, \mathcal{E})\), and an integer~\(k\).  The goal
is to decide if there is a \emph{hitting set}~\(S \subseteq V\)
with~\(|{S}| \leq k\) and~\(S \cap e \neq \emptyset\) for all~\(e \in
\mathcal{E}\).  \textsc{Hitting Set} is W[2]-hard for the
parameter~$k$~\cite{DF13}.

\begin{theorem}\label{thm:npbin}
\reallocEFone\ is NP-hard and is $W[2]$-hard parameterized by the
number of reallocations $k$, even when all agents have binary
valuations.
\end{theorem}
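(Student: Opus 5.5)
We reduce from \textsc{Hitting Set} with a parameter-preserving reduction that keeps the budget $k$ unchanged and uses only $0/1$ values. This gives both NP-hardness and $W[2]$-hardness in $k$. The idea is that one source agent $s$ holds one good per vertex. For each hyperedge $e$ there is an agent $a_e$ who EF1-envies $s$ exactly until at least one good belonging to a vertex of $e$ has left $s$.

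\emph{Construction.} Let $(\mathcal{H}=(V,\mathcal{E}),k)$ be given. We may assume every edge is nonempty. To assume every edge has size at least $2$, add to each singleton edge a fresh vertex occurring only in that edge. This preserves the answer, since in any hitting set a fresh vertex can be replaced by the original vertex of its edge.

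The agents are $s$, one agent $a_e$ for every $e\in\mathcal{E}$, and $k$ sink agents $t_1,\dots,t_k$. The goods are:
\begin{itemize}
\item a vertex good $g_v$ for each $v\in V$;
\item for each $e\in\mathcal{E}$, a set $P_e$ of $|e|-2$ private goods.
\end{itemize}
Agent $a_e$ values $g_v$ at $1$ iff $v\in e$, values its own private goods $P_e$ at $1$, and values everything else at $0$. Agents $s$ and $t_i$ value every good at $0$. Initially $\sigma_s=\{g_v\mid v\in V\}$, $\sigma_{a_e}=P_e$, and every $t_i$ is empty; the budget is $k$. Since $s$ and the sinks value all goods at $0$, they never envy anyone. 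An edge agent $a_e$ also never envies another edge agent: in $a_e$'s eyes, every other $a_{e'}$ holds only zero-valued private goods, plus whatever vertex goods have been moved to it, which is the same situation as for a sink.

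\emph{Forward direction.} Let $S$ be a hitting set with $|S|\le k$. Move each $g_v$ with $v\in S$ to a distinct sink; this uses at most $k$ reallocations. Every bundle other than $s$'s that can carry positive value for $a_e$ is a sink holding one good, so EF1 towards it is trivial. For $s$: because $S$ hits $e$, $a_e$ values $\sigma'_s$ at most $|e|-1$. After removing one good valued $1$ by $a_e$, at most $|e|-2=v_{a_e}(P_e)$ remains, so $a_e$ does not EF1-envy $s$.

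\emph{Backward direction.} This is the step needing care. Take an EF1 allocation $\sigma'$ at distance at most $k$, and let $S=\{v\mid g_v \text{ moved}\}$, so $|S|\le k$. Suppose some edge $e$ is not hit by $S$. Then $s$ still holds all $|e|$ goods that $a_e$ values among the vertex goods, and $a_e$ values $\sigma'_s$ at least $|e|$. Meanwhile $a_e$'s own value is at most $|e|-2$. The only goods $a_e$ values are $P_e$ and $\{g_v\mid v\in e\}$. It can lose private goods but cannot gain any valued good, since every $g_v$ with $v\in e$ stayed with $s$. Removing any single good from $\sigma'_s$ leaves value at least $|e|-1>|e|-2$ for $a_e$, so $a_e$ EF1-envies $s$, a contradiction. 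Hence $S$ is a hitting set of size at most $k$. Finally, replacing any fresh vertex in $S$ by the original vertex of its singleton edge yields a hitting set of the original instance.

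The construction is polynomial, all valuations are binary, and $k'=k$, so both NP-hardness and $W[2]$-hardness follow from those of \textsc{Hitting Set}. The main obstacle is ensuring that the moved vertex goods cannot themselves create new EF1-envy. I would verify this by routing them to separate zero-valuing sinks, so each receiving bundle has at most one good.
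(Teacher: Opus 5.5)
Your proposal is correct and is essentially the paper's reduction: your source $s$ and sinks $t_i$ are the paper's $a^\star$ and selector agents $b_j$, with the same $|e|-2$ private goods per hyperedge and the same forward and backward arguments. The only difference is in the preprocessing of singleton hyperedges: you pad each one with a fresh vertex and keep $k$ unchanged, whereas the paper forces the singleton's vertex into the solution and decreases $k$; both steps are valid.
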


\begin{proof}
Given an instance $(\mathcal{H}=(V,\mathcal{E}),k)$ of \textsc{Hitting Set}, we may assume that every hyperedge has size at least two. Indeed, if
$\emptyset \in \mathcal{E}$, then the instance is trivially a no-instance.
Moreover, if $\{v\}\in\mathcal{E}$, then $v$ must be contained in every hitting set; hence we may delete all hyperedges containing $v$, decrease
$k$ by one, and repeat. This preserves equivalence and does not increase the parameter.

We then construct a \reallocEFone{} instance $(A,G,v,\sigma,k)$ as follows.

For each vertex~$v \in V$, there is a good~$g_v \in G$, and for every hyperedge $e\in\mathcal{E}$, we add $|e|-2$ additional goods $g_{e,1},\ldots,g_{e,|e|-2}$.
Next, for each edge~$e \in \mathcal{E}$, we add an agent~$a_e$ to~$A$, as well as an additional agent $a^\star$ and $k$ selector agents $b_1,\ldots,b_k$. 

The valuation functions look as follows:
\begin{align*}
    v_{a_e}(g_{e',i}) =
    \begin{cases}
        1,  \text{ if } e=e',\\
        0,  \text{ otherwise,}
    \end{cases} \qquad 
    v_{a_e}(g_v) =
    \begin{cases}
        1,  \text{ if } v\in e,\\
        0,  \text{ otherwise.}
    \end{cases}
\end{align*}
Lastly, $v_{a^\star}(g)=0$ and $v_{b_j}(g)=0$ for all goods $g\in G$ and all $j\in[k]$.
Additionally, we construct the initial allocation $\sigma$. We set
$\sigma_{a_e}=\{g_{e,i} \mid i\in[|e|-2]\}$
for every $e\in\mathcal{E}$, and
$\sigma_{a^\star}=\{g_v \mid v\in V\}$.
Every selector agent $b_j$ initially receives the empty bundle. Finally,
we set the number of reallocations to $k$. This finishes the construction, which is clearly polynomial.

We claim that there exists a hitting set of size at most $k$ in
$\mathcal{H}$ if and only if the \reallocEFone{} instance has a solution
$\sigma'$ reachable by at most $k$ reallocations.

$(\Rightarrow)$ Let $S$ be a hitting set of size at most $k$ in
$\mathcal{H}$. Consider the allocation $\sigma'$ in which all goods
$g_v$ with $v\in S$ are separately reallocated to distinct selector
agents. This uses at most $k$ reallocations.
Note that initially every hyperedge agent $a_e$ envied $a^\star$ up to
one good, since
$
v_{a_e}(\sigma_{a^\star})=|e|$
 and 
$v_{a_e}(\sigma_{a_e})=|e|-2.
$
After removing the goods corresponding to $S$, we have
$
v_{a_e}(\sigma'_{a^\star})\le |e|-1,
$
as, by definition of hitting set, we removed at least one valuable good from
$a^\star$ for every hyperedge $e$. Furthermore, as we do not remove goods
from the bundles of the hyperedge agents,
$
v_{a_e}(\sigma'_{a_e})=|e|-2.
$
Thus $a_e$ does not envy $a^\star$ up to one good. Since $a^\star$ and all selector agents assign value zero to all goods they do not envy anyone. Moreover, hyperedge agents assign value zero to the allocations of other hyperedge agents, and since each selector agent receives at most one good, no hyperedge agent EF1-envies a selector agent. Therefore $\sigma'$ is an EF1 allocation reachable by at most $k$ reallocations.

$(\Leftarrow)$ Let $\sigma'$ be a solution to the \reallocEFone{} instance.
Let
$S=\{v\in V \mid \sigma'(g_v)\neq a^\star\}$.
Since at most $k$ goods are reallocated, we have $|S|\le k$.
Suppose that $S$ does not form a hitting set. This implies that there exists
a hyperedge $e\in\mathcal{E}$ such that all corresponding vertex goods
remain in $\sigma'_{a^\star}$. Hence
$
v_{a_e}(\sigma'_{a^\star})\ge |e|.
$
For every $e\in\mathcal{E}$, the total value of all goods for
agent $a_e$ is
$
v_{a_e}(G)=2|e|-2.
$
Since $a^\star$ holds goods of value at least $|e|$ for $a_e$, the value
of $a_e$'s bundle is at most
$
v_{a_e}(\sigma'_{a_e})\le |e|-2.
$
As all valuations are binary, removing any single good from
$\sigma'_{a^\star}$ decreases its value for $a_e$ by at most one. Thus,
for every $g\in\sigma'_{a^\star}$,
$
v_{a_e}(\sigma'_{a^\star}\setminus\{g\})
\ge |e|-1
>
|e|-2
\ge v_{a_e}(\sigma'_{a_e}).
$
Thus $a_e$ EF1-envies $a^\star$, contradicting that $\sigma'$ is EF1. Hence $S$ is a hitting set for $\mathcal{H}$.

The reduction preserves $k$ and valuations are binary.
Since \textsc{Hitting Set} is NP-hard and $W[2]$-hard parameterized by $k$,
the result follows.
\qedhere
\end{proof}

We use a similar reduction to prove this for $\beta \in \{EF,EFX\}$.

\begin{theorem}\label{thm:npbinX}
\realloc\ is NP-hard and is $W[2]$-hard parameterized by the number of reallocations $k$, even when all agents have binary valuations for $\beta \in \{EF,EFX\}$.
\end{theorem}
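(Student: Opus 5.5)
The plan is to adapt the \textsc{Hitting Set} reduction from \cref{thm:npbin} to \reallocEF{}, and then obtain the EFX case from \cref{lem:ef-efx}. That lemma keeps the budget $k$ and $v_{\max}$ unchanged, and its dummy goods have value zero for everyone, so binary valuations and the parameter survive the transfer to \reallocEFX{}. As in \cref{thm:npbin}, I first preprocess the \textsc{Hitting Set} instance $(\mathcal{H}=(V,\mathcal{E}),k)$ so that every hyperedge has size at least two. If $\emptyset\in\mathcal{E}$ the instance is a no-instance. If $\{v\}\in\mathcal{E}$, I take $v$ into the solution, delete every hyperedge containing $v$, and decrease $k$ by one.

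The construction is one threshold step stronger than the EF1 one, because EF forbids envy outright. For each $v\in V$ there is a good $g_v$. For each $e\in\mathcal{E}$ there are $|e|-1$ private goods $g_{e,1},\dots,g_{e,|e|-1}$, so one more than before. The agents are a hyperedge agent $a_e$ for each $e$, a holder $a^\star$, and $k$ selectors $b_1,\dots,b_k$. Agent $a_e$ values $g_{e,i}$ and $g_v$ for $v\in e$ at $1$ and every other good at $0$. The agents $a^\star$ and $b_j$ value all goods at $0$. Initially $\sigma_{a_e}=\{g_{e,i}\}_i$, $\sigma_{a^\star}=\{g_v\mid v\in V\}$, the selectors are empty, and the budget is $k$. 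Initially $a_e$ values its own bundle at $|e|-1$ and $a^\star$'s bundle at $|e|$, so $a_e$ envies $a^\star$.

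For $(\Rightarrow)$, given a hitting set $S$ with $|S|\le k$, I move each $g_v$ with $v\in S$ to a distinct selector. I then check envy for each pair.
\begin{itemize}
\item Agent $a_e$ now values $\sigma'_{a^\star}$ at most $|e|-1$, which equals its own value.
\item Agent $a_e$ values each selector bundle at most $1\le |e|-1$; this is exactly where $|e|\ge 2$ is needed.
\item Agent $a_e$ values other hyperedge agents' bundles at $0$.
\item The zero-valuation agents $a^\star$ and $b_j$ never envy anyone.
\end{itemize}

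For $(\Leftarrow)$, let $S=\{v\mid \sigma'(g_v)\ne a^\star\}$, so $|S|\le k$. Suppose some $e$ is not hit. Then $a^\star$ keeps all $|e|$ goods that $a_e$ values from $V$. Since $v_{a_e}(G)=2|e|-1$, agent $a_e$ holds value at most $|e|-1<|e|$ for itself, which is envy. This holds wherever the other moved goods went, including to hyperedge agents. The reduction preserves $k$ and uses only binary values, which gives NP-hardness and $W[2]$-hardness for EF; \cref{lem:ef-efx} then transfers both to EFX.

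The step I expect to be most delicate is calibrating the number of private goods. It must be small enough that initially $a_e$ strictly envies $a^\star$ and that an unhit edge forces envy for any placement of the moved goods. It must also be large enough that a selector holding one valued good causes no envy. The choice $|e|-1$ together with the size-$\ge 2$ preprocessing meets both requirements.
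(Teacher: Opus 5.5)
Your proposal is correct and follows the paper's proof essentially step for step. It uses the same \textsc{Hitting Set} construction with $|e|-1$ private goods per hyperedge, needs $|e|\ge 2$ for exactly the same reason (the selector bundles), takes $S=\{v\mid \sigma'(g_v)\neq a^\star\}$ in the backward direction, and transfers the result to EFX via \cref{lem:ef-efx}.
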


\begin{proof}
Given an instance $(\mathcal{H}=(V,\mathcal{E}),k)$ of \textsc{Hitting Set}, again we may assume that every hyperedge has size at least two.

We construct a \reallocEF{} instance $(A,G,v,\sigma,k)$ as follows.
For each vertex~$v\in V$, there is a good~$g_v\in G$, and for every
hyperedge $e\in\mathcal{E}$, we add $|e|-1$ additional goods
$g_{e,1},\ldots,g_{e,|e|-1}$.
Next, for each hyperedge~$e\in\mathcal{E}$, we add an agent~$a_e$ to~$A$,
as well as an additional agent $a^\star$ and $k$ selector agents
$b_1,\ldots,b_k$.

The valuation functions look as follows:
\begin{align*}
    v_{a_e}(g_{e',i}) =
    \begin{cases}
        1,  \text{ if } e=e',\\
        0,  \text{ otherwise,}
    \end{cases}
    \qquad
    v_{a_e}(g_v) =
    \begin{cases}
        1,  \text{ if } v\in e,\\
        0,  \text{ otherwise.}
    \end{cases}
\end{align*}
Lastly, $v_{a^\star}(g)=0$ and $v_{b_j}(g)=0$ for all goods $g\in G$
and all $j\in[k]$.

Additionally, we construct the initial allocation $\sigma$. We set $\sigma_{a_e}=\{g_{e,i}: i\in[|e|-1]\}$ 
for every $e\in\mathcal{E}$, and $\sigma_{a^\star}=\{g_v \mid v\in V\}$.
Every selector agent $b_j$ initially receives the empty bundle. Finally, we set the number of reallocations to $k$. This finishes the construction, which is clearly polynomial.

We claim that there exists a hitting set of size at most $k$ in $\mathcal{H}$ if and only if the \reallocEF{} instance has a solution
$\sigma'$ reachable by at most $k$ reallocations.

$(\Rightarrow)$ Let $S$ be a hitting set of size at most $k$ in
$\mathcal{H}$. Consider the allocation $\sigma'$ in which all goods
$g_v$ with $v\in S$ are separately reallocated to distinct selector
agents. This uses at most $k$ reallocations.
Note that initially every hyperedge agent $a_e$ envied $a^\star$, since $v_{a_e}(\sigma_{a^\star})=|e|$ and $v_{a_e}(\sigma_{a_e})=|e|-1$.

After removing the goods corresponding to $S$, we have $v_{a_e}(\sigma'_{a^\star})\le |e|-1$,
as, by definition of hitting set, we removed at least one valuable good from $a^\star$ for every hyperedge $e$. Furthermore, as we do not remove
goods from the bundles of the hyperedge agents,  $v_{a_e}(\sigma'_{a_e})=|e|-1$.
Thus $a_e$ does not envy $a^\star$.

Since $a^\star$ and all selector agents assign value zero to all goods,
they do not envy anyone. Moreover, hyperedge agents assign value zero to
the allocations of other hyperedge agents, and since each selector agent
receives at most one good, every hyperedge agent values the bundle of any
selector agent by at most one. As every hyperedge has size at least $2$, we have $v_{a_e}(\sigma'_{a_e})=|e|-1\ge 1$ so no envy is created toward a selector agent. Therefore $\sigma'$ is an EF allocation reachable by at most $k$ reallocations.

$(\Leftarrow)$ Let $\sigma'$ be a solution to the \reallocEF{} instance.
Let $S=\{v\in V \mid \sigma'(g_v)\neq a^\star\}$.
Since at most $k$ goods are reallocated, we have $|S|\le k$.
Suppose that $S$ does not form a hitting set. This implies that there exists a hyperedge $e\in\mathcal{E}$ such that all corresponding vertex
goods remain in $\sigma'_{a^\star}$. Hence
$v_{a_e}(\sigma'_{a^\star})\ge |e|$.
For every $e\in\mathcal{E}$, the total value of all goods for agent
$a_e$ is $v_{a_e}(G)=2|e|-1$.
Since $a^\star$ holds goods of value at least $|e|$ for $a_e$, the value of $a_e$'s bundle is at most
$v_{a_e}(\sigma'_{a_e})\le |e|-1$.
Thus $a_e$ envies $a^\star$, contradicting that $\sigma'$ is EF. Hence $S$ is a hitting set for $\mathcal{H}$.

The reduction preserves the parameter $k$ and valuations are binary.
Since \textsc{Hitting Set} is NP-hard and $W[2]$-hard parameterized by
$k$, the result follows for \reallocEF{}.
By \cref{lem:ef-efx}, the same result follows for \reallocEFX{}.
\end{proof}

The preceding hardness results show that identical valuations and binary valuations alone are not sufficient for tractability in general. However, when both restrictions are imposed simultaneously, the problem becomes tractable.

\subsection{Identical and Binary}
Combining both restrictions, the optimal bundle structure is determined by value-$1$ counts alone, yielding polynomial time algorithms.

\begin{theorem}\label{thm:polyidx-ef1}
    When agents have identical binary valuations, \realloc{} is solvable in $\Oh(n+m)$ time for $\beta \in \{EF, EF1\}$.
\end{theorem}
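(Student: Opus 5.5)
With identical binary valuations, every good is either a $1$-good or a $0$-good, and the value of a bundle is simply the number of $1$-goods it contains. Let $c_i$ be the number of $1$-goods held by agent $a_i$ under $\sigma$, and let $T=\sum_i c_i$. Moving a $0$-good never changes any value, so we may assume that only $1$-goods are reallocated. The question then becomes: which target count vectors $(c'_1,\dots,c'_n)$ with $\sum_i c'_i=T$ are fair, and what is the cheapest way to reach one? Converting $c$ into $c'$ requires at least $\sum_i \max(0,c_i-c'_i)$ moves. This many moves also suffices, since we can send surplus $1$-goods directly from agents with $c_i>c'_i$ to agents with $c'_i>c_i$. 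So the cost of a target vector is exactly $\frac12\sum_i|c_i-c'_i|$.

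\textbf{Characterizing fair vectors.} Under EF with identical valuations, every agent must hold the same number of $1$-goods, so $c'_i=T/n$ for all $i$. If $n\nmid T$, the answer is no. Otherwise the target is unique and the cost is $\sum_i\max(0,c_i-T/n)$, which we compare against $k$.

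For EF1, agent $a_i$ does not EF1-envy $a_j$ exactly when $c'_i\ge c'_j-1$; removing a $1$-good from $a_j$'s bundle is always possible when $c'_j\ge 1$. Hence a vector is EF1 if and only if $\max_j c'_j-\min_i c'_i\le 1$. Write $T=qn+r$ with $0\le r<n$. Then the EF1 vectors are exactly those in which $r$ agents receive $q+1$ and the remaining $n-r$ agents receive $q$.

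\textbf{Choosing the cheapest EF1 vector.} It remains to decide which $r$ agents get $q+1$. The cost contribution of agent $a_i$ is $|c_i-q|$ if it receives $q$ and $|c_i-q-1|$ if it receives $q+1$. Choosing the $q+1$ option saves $1$ for agents with $c_i\ge q+1$, costs $1$ more for agents with $c_i\le q$, and never has any other effect. It is therefore optimal to assign $q+1$ to $\min(r,|\{i: c_i\ge q+1\}|)$ agents with $c_i\ge q+1$, and to fill any remaining slots arbitrarily. Equivalently, the minimum cost is
\[
\sum_i \max(0,c_i-q) - \min\bigl(r,|\{i:c_i>q\}|\bigr).
\]
To see this, note that the surplus above level $q$ must be moved out, except that up to $r$ units may stay with agents that already hold more than $q$. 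Each such retained unit is one fewer good that has to move.

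The main obstacle is justifying that this quantity is both achievable and a lower bound. I would argue it through the exact cost formula $\frac12\sum_i|c_i-c'_i|$ together with the exchange argument above. All of $c_i$, $q$, $r$, and the final sums can be computed in $\Oh(n+m)$ time, and the algorithm answers yes if and only if the cost is at most $k$. This handles both $\beta=\mathrm{EF}$ and $\beta=\mathrm{EF1}$.
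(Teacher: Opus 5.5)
Your proof is correct and follows essentially the same route as the paper: the same characterization of EF1 as all values lying in $\{q,q+1\}$ with exactly $r$ agents at $q+1$, an exact cost for each target vector (you count the surplus $\sum_i\max(0,c_i-c'_i)$ where the paper counts the equal deficit), the same greedy preference for agents with $c_i>q$ when choosing who receives $q+1$, and the identical divisibility argument for EF. Your closed form $\sum_i\max(0,c_i-q)-\min(r,|\{i:c_i>q\}|)$ coincides with the paper's $B+\sum_{a_i\in H}\Delta_i$, since $\sum_i\max(0,c_i-q)=B+r$.
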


\begin{proof}
    Let $\sigma=(\sigma_1,\dots,\sigma_n)$ be the initial allocation and let $k$
    be the allowed number of reallocations. Since valuations are identical and
    binary, each good has value either $0$ or $1$ for every agent. Let $m_1$ be
    the number of value-$1$ goods, and define
     $q=\left\lfloor \frac{m_1}{n}\right\rfloor$,
        $r=m_1-nq$.

    We start by proving the claim for EF1. Under identical binary valuations, an allocation is EF1
    if and only if all bundle values differ by at most one. Indeed, if
    $v(X_j)\ge v(X_i)+2$, then removing any single good from $X_j$ decreases its
    value by at most one, so agent $i$ still envies agent $j$. Conversely, if all
    bundle values differ by at most one, then whenever $i$ envies $j$, we have
    $v(X_j)=v(X_i)+1$. Thus $X_j$ contains a value-$1$ good, and removing such a
    good eliminates the envy.

    Hence every EF1 allocation must have exactly $r$ agents with value $q+1$,
    and all remaining agents with value $q$.

    For each agent $a_i$, let $s_i=v(\sigma_i)$. If $H\subseteq A$ is the set of
    agents chosen to have final value $q+1$, with $|H|=r$, define the target
    value
    \[
        t_i =
        \begin{cases}
            q+1, & a_i\in H,\\
            q, & a_i\notin H .
        \end{cases}
    \]
    For this fixed choice of $H$, the minimum number of reallocations of
    value-$1$ goods needed is $C(H)=\sum_{i=1}^n \max\{0,t_i-s_i\}$.
    This many moves are necessary, since each deficient agent $a_i$ must receive $t_i-s_i$ value-$1$ goods, and one reallocation can increase the value of at most one deficient bundle by one. They are also sufficient, because the total
    target value equals the total initial value, so the total surplus equals the total deficit.

    It remains to choose $H$ minimizing $C(H)$. Let $B=\sum_{i=1}^n \max\{0,q-s_i\}$. This is the cost of bringing every agent to value at least $q$. Choosing
    agent $a_i$ to be in $H$ instead of outside $H$ increases the cost by
    $$\Delta_i=\max\{0,q+1-s_i\}-\max\{0,q-s_i\}.$$
    Thus $\Delta_i=1$ if $s_i\le q$, and $\Delta_i=0$ otherwise. Therefore an
    optimal set $H$ is obtained by choosing $r$ agents with minimum $\Delta_i$, which can be done by sorting.

    If the resulting minimum value $B+\sum_{a_i\in H}\Delta_i$ exceeds $k$, we return \textsc{No}. Otherwise, we move value-$1$ goods from agents above their target value to agents below their target value until every agent reaches its target. The number of moves is exactly $B+\sum_{a_i\in H}\Delta_i\le k$.

The resulting allocation has exactly $r$ agents of value $q+1$ and all other
agents of value $q$, and hence is EF1 by the characterization above. Computing
all bundle values takes $\Oh(m)$ time. Computing $q,r,B$ and all values
$\Delta_i$ takes $\Oh(n)$ time. Since $\Delta_i\in\{0,1\}$ for every agent, an
optimal set $H$ can be found in linear time by first choosing agents with
$\Delta_i=0$ and then, if necessary, arbitrary agents with $\Delta_i=1$ until
$|H|=r$. Constructing the target allocation performs at most $m$ reallocations. Hence the algorithm runs in $\Oh(m+n)$ time.

For EF, let $m_1$ be the number of value-$1$ goods. If $m_1$ is not divisible
by $n$, then no EF allocation exists, since identical valuations require all
agents to receive the same value. Hence we may assume that $q=\frac{m_1}{n}$ is an integer.

For each agent $a_i$, let $s_i=v(\sigma_i)$. In any EF allocation, every agent must have value exactly $q$. Thus every agent with $s_i<q$ must receive
$q-s_i$ value-$1$ goods. Since one reallocation can increase the value of at
most one deficient bundle by one, at least $\sum_{i \in [n]:s_i<q} (q-s_i)$
reallocations are necessary. Conversely, this many reallocations are sufficient:
move value-$1$ goods from agents with value greater than $q$ to agents with
value less than $q$ until every agent has value exactly $q$. The total surplus
equals the total deficit because the total value is $nq$.

Therefore, the minimum number of reallocations needed to reach an EF allocation is $\sum_{i \in [n]:s_i<q} (q-s_i)$.
If this number is greater than $k$, return \textsc{No}; otherwise return
\textsc{Yes} and perform the corresponding reallocations.

Computing $m_1$ and all bundle values takes $\Oh(m)$ time, and computing the above sum takes $\Oh(n)$ time. The construction performs at most $m$ reallocations. Hence the algorithm runs in $\Oh(m+n)$ time. 
\end{proof}

\begin{theorem}\label{thm:polyidx}
   When agents have identical binary valuations, \reallocEFX{} is solvable in $\Oh(m+n\log n)$ time.
\end{theorem}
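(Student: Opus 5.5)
The plan is to follow the proof of \cref{thm:polyidx-ef1}: first characterize EFX allocations under identical binary valuations, then minimize the reallocation cost over the choice of which agents end up with the higher value. As before, let $m_1$ be the number of value-$1$ goods, $q=\lfloor m_1/n\rfloor$, $r=m_1-nq$, and let $s_i=v(\sigma_i)$. In addition, let $z_i$ denote the number of value-$0$ goods in $\sigma_i$.

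\emph{Step 1 (characterization).} I will show that an allocation $X$ is EFX if and only if two conditions hold. First, all bundle values differ by at most one. Second, every agent whose bundle contains a value-$0$ good has the minimum bundle value. For necessity: EFX implies EF1, so the values differ by at most one by the argument in \cref{thm:polyidx-ef1}. Moreover, if $X_j$ contains a value-$0$ good $g$, then $v(X_j\setminus\{g\})=v(X_j)$, so every agent $i$ needs $v(X_i)\ge v(X_j)$. For sufficiency: if $X_j$ contains only value-$1$ goods, removing any one of them lowers its value by exactly one, which suffices when values differ by at most one. If $X_j$ contains a value-$0$ good, $j$ has the minimum value, so nobody envies $j$ at all. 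Consequently, when $r=0$ an EFX allocation is exactly an allocation in which all values equal $q$, and the cost is the EF cost from \cref{thm:polyidx-ef1}. When $r>0$, an EFX allocation is given by a set $H$ of $r$ agents with value $q+1$ and no value-$0$ goods, while all other agents have value $q$.

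\emph{Step 2 (cost for fixed $H$).} For fixed $H$ with targets $t_i$ as before, I claim the minimum number of reallocations is
\[
C(H)=\sum_{i}\max\{0,t_i-s_i\}+\sum_{a_i\in H} z_i .
\]
For the lower bound: each deficient agent must receive $t_i-s_i$ value-$1$ goods, and each value-$0$ good initially held by an agent in $H$ must leave that agent. These are disjoint sets of moved goods, so the two counts add. For achievability: surplus agents hand over only value-$1$ goods to deficient agents, which is possible because total surplus equals total deficit. The value-$0$ goods held by $H$-agents are all sent to a single agent outside $H$, which exists because $r<n$. This does not change any value, so the resulting allocation satisfies the characterization.

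\emph{Step 3 (optimizing $H$).} Writing $C(H)=B+\sum_{a_i\in H}(\Delta_i+z_i)$ with $B$ and $\Delta_i$ as in \cref{thm:polyidx-ef1}, an optimal $H$ consists of the $r$ agents with the smallest keys $\Delta_i+z_i$. I will obtain these by sorting the keys in $\Oh(n\log n)$ time. Since the $z_i$ are now unbounded, the $\{0,1\}$ bucketing trick from the EF1 case no longer applies, which is why the bound has a $\log n$ factor. Computing all $s_i$ and $z_i$ takes $\Oh(m)$ time, so the algorithm answers \textsc{Yes} exactly when $C(H)\le k$, within $\Oh(m+n\log n)$ total time.

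The main obstacle is Step 1: getting the characterization exactly right, including the edge cases of empty bundles and $q=0$. A further point to check is the separability in Step 2, namely that forcing value-$0$ goods out of $H$ never interacts with the value-$1$ deficits. I expect this to hold because surplus agents can always give away value-$1$ goods specifically, and moving a value-$0$ good changes no values.
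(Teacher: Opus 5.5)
Your proposal is correct and follows the paper's proof essentially step for step. It uses the same target structure (exactly $r$ agents at value $q+1$ holding no value-$0$ goods, all others at value $q$), the same cost $C(H)=B+\sum_{a_i\in H}(\mathbf 1[s_i\le q]+z_i)$ with the same disjointness argument, and the same choice of $H$ as the $r$ smallest keys found by sorting. Your derivation of the value range from the EF1 characterization is slightly tidier than the paper's, which explicitly rules out only bundles of value at most $q-1$ and not bundles of value $q+2$ or more.
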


\begin{proof}
    We proceed similarly to \cref{thm:polyidx-ef1}, but additionally account for the fact that, in an EFX allocation, every bundle of value $q+1$ must contain only value-$1$ goods. 
    
    Let $\sigma=(\sigma_1,\dots,\sigma_n)$ be the given allocation, and let $k$ be
    the allowed number of reallocations. Since valuations are identical and
    binary, every good has value either $0$ or $1$ for every agent. Let $m_1$
    denote the number of value-$1$ goods, and let
    \[
        q\coloneqq\left\lfloor \frac{m_1}{n}\right\rfloor,
        \qquad
        r\coloneqq m_1-nq .
    \]
    Thus any allocation whose bundle values lie in $\{q,q+1\}$ has exactly
    $r$ agents of value $q+1$.

    We first characterize the target structure of an EFX allocation. In any EFX
    allocation under identical binary valuations, every agent must receive value
    either $q$ or $q+1$. Indeed, if some agent received value at most $q-1$,
    then, since the total value is $m_1$, some other agent would receive value
    at least $q+1$. Removing any value-$1$ good from the latter agent's bundle
    would still leave value at least $q$, so the former agent would EFX-envy
    them.

    Moreover, every bundle of value $q+1$ must contain only value-$1$ goods.
    Otherwise, if such a bundle contained a value-$0$ good, then an agent with
    value $q$ would still envy it after that value-$0$ good was removed.

    Hence an EFX allocation has the following structure: exactly $r$ agents have
    value $q+1$, all other agents have value $q$, and every value-$(q+1)$ bundle
    contains no value-$0$ good.

    \smallskip\noindent\emph{Algorithm.}
    For every agent $a_i$, let $x_i$
be the number of value-$1$ goods initially in $\sigma_i$, and let $z_i$ be the
number of value-$0$ goods initially in $\sigma_i$.

Now fix a set $H\subseteq A$ of size $r$. We interpret $H$ as the set of agents
that will have final value $q+1$. Thus the target value of agent $a_i$ is
\[
    t_i(H):=
    \begin{cases}
        q+1, & a_i\in H,\\
        q, & a_i\notin H.
    \end{cases}
\]

For this fixed choice of $H$, the minimum number of value-$1$ reallocations
needed to reach the target values is
$\sum_{i=1}^n \max\{0,t_i(H)-x_i\}$.
This many reallocations are necessary because each such reallocation can
increase the value of at most one deficient bundle by one. They are also
sufficient because the total target value equals the total initial value,
namely $m_1$, so the total surplus of value-$1$ goods equals the total deficit.

In addition, if an agent $a_i\in H$ initially holds $z_i$ value-$0$ goods, then
all these goods must be moved away. Indeed, agents in $H$ are precisely the
agents that must end with value $q+1$, and we have shown above that a
value-$(q+1)$ bundle cannot contain any value-$0$ good in an EFX allocation.
Thus each of these $z_i$ goods contributes one necessary reallocation.

For the fixed choice of $H$, the value-$1$ reallocations and the value-$0$
reallocations account for disjoint goods. Therefore the exact minimum number of reallocations for this fixed
choice of $H$ is
\[
    C(H)
    :=
    \sum_{i\notin H}\max\{0,q-x_i\}
    +
    \sum_{i\in H}\max\{0,q+1-x_i\}
    +
    \sum_{i\in H}z_i .
\]
We now rewrite this cost in a form that separates the part independent of $H$
from the part depending on $H$. Since every agent must have final value at least $q$, every feasible EFX target requires the baseline cost
$B:=\sum_{i=1}^n \max\{0,q-x_i\}$. 
Then
\[
\begin{aligned}
    C(H)
    &=
    \sum_{i=1}^n \max\{0,q-x_i\}
    +
    \sum_{i\in H}
    \Bigl(
        \max\{0,q+1-x_i\}
        -
        \max\{0,q-x_i\}
        +
        z_i
    \Bigr)  \\
    &=
    B+\sum_{i\in H}\gamma_i,
\end{aligned}
\]
where
\[
    \gamma_i
    :=
    \max\{0,q+1-x_i\}
    -
    \max\{0,q-x_i\}
    +
    z_i .
\]
Since $x_i$ is integral, the difference
$\max\{0,q+1-x_i\}-\max\{0,q-x_i\}$
is equal to $1$ if $x_i\le q$, and equal to $0$ otherwise. Hence
\[
    \gamma_i=\mathbf 1[x_i\le q]+z_i .
\]
Thus $\gamma_i$ is the marginal cost of choosing $a_i$ as one of the final
value-$(q+1)$ agents. This
marginal cost consists of two parts: if $x_i\le q$, then $a_i$ needs one
additional value-$1$ good to reach $q+1$ rather than merely $q$; moreover, all
$z_i$ initially held value-$0$ goods must be removed from $a_i$'s bundle.

It remains to minimize $C(H)$ over all sets $H$ of size $r$. Since $B$ is
independent of $H$, this is equivalent to minimizing $\sum_{i\in H}\gamma_i$
over all sets $H$ of size $r$. Therefore an optimal choice is obtained by
letting $H$ be the set of $r$ agents with smallest values of $\gamma_i$.

To see this formally, fix any set $H$ of size $r$. The value $C(H)$ is the
minimum cost of reaching an EFX allocation whose value-$(q+1)$ agents are
exactly those in $H$. Thus the exchange argument compares feasible EFX target
structures.
Suppose that $H$ does not consist of the $r$ smallest values of $\gamma_i$.
Then there exist agents $a_i\in H$ and $a_j\notin H$ such that $\gamma_j\le \gamma_i$.
Let $H'=(H\setminus\{a_i\})\cup\{a_j\}$.
Then $H'$ is again a feasible choice of the value-$(q+1)$ agents, and $C(H')=C(H)-\gamma_i+\gamma_j\le C(H)$.
Repeating this exchange eventually yields a set consisting of the $r$ smallest values of $\gamma_i$, without increasing the cost. Hence choosing the $r$ smallest $\gamma_i$ values minimizes $C(H)$ over all EFX target structures.

Let $H$ be such an optimal set. If $B+\sum_{a_i\in H}\gamma_i>k$,
then every possible choice of the $r$ final value-$(q+1)$ agents requires more
than $k$ reallocations. Hence no EFX allocation is reachable within the budget.
Otherwise, we construct one as follows.

For every agent $a_i$, set the target value to $q+1$ if $a_i\in H$, and to
$q$ otherwise. Move value-$1$ goods from agents whose current value exceeds
their target value to agents whose current value is below their target value
until all agents reach their target values. This uses exactly $\sum_{i=1}^n \max\{0,t_i(H)-x_i\}$
reallocations. Then move every value-$0$ good held by an agent in $H$ to some
agent outside $H$. This is possible since $r<n$ by definition of
$r=m_1-nq$. These moves use exactly $\sum_{i\in H} z_i$ reallocations. Hence the total number of reallocations is exactly $C(H)$.

The resulting allocation is EFX. Agents in $H$ have value $q+1$ and hold only
value-$1$ goods, while all other agents have value $q$. Thus no agent envies an
agent of weakly smaller value. If an agent of value $q$ compares herself with an agent of value $q+1$, then removing any good from the latter bundle leaves value $q$, because that bundle contains only value-$1$ goods. Hence there is no EFX envy.

It remains to analyze the running time. Computing all values $x_i$ and all
numbers $z_i$ takes $\mathcal O(m)$ time. Computing the values $\gamma_i$ takes $\mathcal O(n)$ time. Selecting the $r$ smallest $\gamma_i$ values can be done by sorting in $\mathcal O(n\log n)$ time. The construction moves at most $m$ goods. Hence the total running time is $\mathcal O(m+n\log n)$.
\end{proof}

\section{Parameterized Algorithms}
We study \realloc{} parameterized by $n$, $m$, $k$, and $v_{\max}$; \Cref{fig:params-overview} summarizes the landscape.

\begin{lemma}\label[lemma]{lem:red-ef1-efx}
Each instance $(A,G,v,\sigma,k)$ of \realloc{}
can be reduced in polynomial time to an equivalent instance
$(A',G,v|_{A'},\sigma|_{A'},k)$ of \realloc{}
with $|A'| \leq \mathcal{O}(m^3)$, for
$\beta\in\{\mathrm{EF1},\mathrm{EFX}\}$.
\end{lemma}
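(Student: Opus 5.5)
The plan is to keep every agent that initially owns a good and prune only agents with $\sigma_i=\emptyset$. At most $m$ agents own goods, and any solution changes the owner of at most $k\le m$ goods (w.l.o.g. $k\le m$). Hence at most $m$ initially empty agents receive a good, and all others remain empty in the final allocation $\pi$. So the reduction rests on two facts about an agent $a$ with $\pi_a=\emptyset$. First, nobody EF1- or EFX-envies $a$, since both definitions are vacuous for empty bundles. Second, $a$'s own constraint depends only on the set $P_a=\{g\mid v_a(g)>0\}$:

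\begin{itemize}
\item for EF1, $a$ does not EF1-envy $j$ iff $\pi_j$ contains at most one good of $P_a$;
\item for EFX, $a$ does not EFX-envy $j$ iff no bundle $\pi_j$ with $|\pi_j|\ge 2$ meets $P_a$.
\end{itemize}

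In both cases the constraint of a staying-empty agent is a conjunction of \emph{local} constraints. For EF1 these are indexed by pairs $\{g,g'\}\subseteq P_a$, each saying that $g$ and $g'$ are not in the same bundle. For EFX they are indexed by pairs $(g,g')$ with $g\in P_a$, each saying that $g$ and $g'$ are not in the same bundle.

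The reduction is as follows. For each such local constraint (there are $\Oh(m^2)$ of them), mark up to $m+1$ initially empty agents whose $P_a$ induces it. If fewer exist, mark all of them. Let $A'$ consist of the marked agents together with the agents owning goods, so $|A'|=\Oh(m^3)$. Computing the sets $P_a$ and the markings is clearly polynomial.

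The direction from the reduced to the original instance is the easy one. Given a solution $\pi'$ on $A'$, extend it by leaving every deleted agent empty; this uses no extra reallocations. A deleted agent $b$ is not envied by anyone because its bundle is empty. Every local constraint of $b$ has $m+1$ marked representatives, since $b$ itself was not marked. At most $m$ of these representatives receive goods, so one of them stays empty, and its own fairness in $\pi'$ forces that local constraint. Hence $b$ envies nobody in the relevant sense.

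The converse is where I expect the main obstacle. A solution $\pi$ of the original instance might give goods to a deleted agent $b$, and I need to convert it into one that gives goods only to agents of $A'$. I would first choose, among all solutions, one whose set $R$ of initially empty agents receiving goods is, say, lexicographically smallest. I would then try to transfer the entire bundle $\pi_b$ to a marked agent $c$ that remains empty in $\pi$, leaving $b$ empty. This exchange keeps the number of reallocated goods unchanged, and since all bundles other than those of $b$ and $c$ are unchanged, others' envy towards $c$ in the new allocation equals their envy towards $b$ before the swap.

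What must be verified is that $c$ is happy with $\pi_b$ and that $b$, now empty, satisfies its local constraints. For the latter I would use the marking itself: it guarantees that some marked agent inducing each local constraint of $b$ remains empty. If that is not enough, I would strengthen the marking so that the representatives also handle the former. Concretely, for each local constraint I would keep the $m+1$ agents with the weakest envy towards the bundle structure. Alternatively, I would add a separate class of representatives, namely agents whose valuations make them satisfied with any bundle, to absorb the receiving role. Pinning down exactly which representatives make this exchange argument go through while staying within $\Oh(m^3)$ agents is the step I expect to require the most care.
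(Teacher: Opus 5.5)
\paragraph{Assessment.} Your reduced-to-original direction is correct and is essentially the paper's argument: every local constraint of a deleted agent has $m+1$ marked representatives, one of them stays empty, and its fairness forces the constraint.

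\paragraph{The gap.} The original-to-reduced direction is the one you leave open, and with your marking as stated it is false. You only mark initially empty agents that induce some local constraint. So an agent with $|P_a|\le 1$ (EF1), or with $P_a=\emptyset$ (EFX), is never kept, yet such agents can be indispensable as receivers. Here is an EF1 example with $k=1$ and goods $(g_1,g_2,h_1,h_2)$:
\begin{itemize}
\item $a_1$ owns $\{g_1,g_2\}$ and has values $(1,1,10,1)$;
\item $a_2$ owns $\{h_1,h_2\}$ and has values $(1,1,0,0)$;
\item $b$ owns nothing and values everything at $0$.
\end{itemize}
Moving $g_1$ to $b$ yields the EF1 allocation $(\{g_2\},\{h_1,h_2\},\{g_1\})$. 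Your marking, however, gives $A'=\{a_1,a_2\}$, and none of the five allocations within distance $1$ is EF1 there. As long as $a_1$ holds both $g_1$ and $g_2$, agent $a_2$ EF1-envies $a_1$. If one of them moves to $a_2$, then $a_1$ keeps value $1$ but values $a_2$'s bundle at $12$, and at least $2$ after removing any single good.

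\paragraph{What actually needs checking.} Your diagnosis of what must be verified is also off; neither worry is real. First, an agent $c$ that is empty in $\pi$ and receives goods cannot start to envy. Its own value only grows, and every bundle it now compares against is one it already did not envy while empty. Second, the deleted agent $b$'s own constraints are irrelevant, because $b$ is not an agent of the reduced instance. So no lexicographic choice and no special representatives are needed.

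\paragraph{The fix.} What is needed is only a guaranteed supply of agents in $A'$ that are empty in $\pi$. For example, additionally mark $m+1$ arbitrary initially empty agents. Since at most $m$ agents are nonempty, if $U$ is the set of nonempty deleted agents, at least $|U|+1$ of these extra marked agents are empty. You can then hand each bundle $\pi_b$, $b\in U$, to a distinct one. The distance is unchanged, since both $b$ and the receiver are initially empty. All other agents' comparisons are merely relabeled, so your bundle-transfer idea then goes through.

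\paragraph{How the paper handles it.} The paper gets this supply for free: for every good it marks the $m+1$ agents of highest value for it, whether that value is positive or not. It then moves each good of a deleted agent separately to an empty marked agent. This works because a singleton bundle is never EF1- or EFX-envied, and removing goods from a bundle preserves EF1/EFX towards that bundle.
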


\begin{proof} Let $n\coloneqq |A|$.  If $n \leq m + m(m+1)+m^2(m+1)$, there is nothing to do. Hence assume otherwise. The set of agents $A'$ is selected as follows. We first mark every agent who is assigned at least one good in the initial allocation $\sigma$; there are at most
$m$ such agents. Then, for each good $x_j$, we mark the $m+1$ agents with the
highest valuation for $x_j$, breaking ties arbitrarily. Moreover, for each pair
$x_i,x_j$ of distinct goods, we mark $m+1$ agents who evaluate both $x_i$ and
$x_j$ positively, or all such agents if there are at most $m$ of them. After
doing this for all goods and pairs of goods, we delete all unmarked agents.

Clearly, the construction can be carried out in polynomial time. Moreover, we mark at most $m$ agents due to the initial allocation, at most $m(m+1)$ agents for single goods, and at most $m^2(m+1)$ agents for pairs of goods. Hence  $|A'| \leq m + m(m+1)+m^2(m+1)=\mathcal{O}(m^3)$.

We now prove equivalence. 

First suppose that the original instance admits an
allocation $\pi$ reachable from $\sigma$ by at most $k$ reallocations that is EF1, respectively EFX. If every good in $\pi$ is assigned to a marked agent, then $\pi$ is also a feasible solution for the reduced instance. 
Otherwise, suppose some good $x_j$ is assigned in $\pi$ to an unmarked agent.
By construction, there are $m+1$ marked agents whose value for $x_j$ is at least
that of the unmarked agent. Since there are only $m$ goods in total, at least
one of these marked agents receives no good in $\pi$. Reassign $x_j$ to such an
empty marked agent. This does not increase the number of reallocated goods.
Moreover, the new holder receives only the single good $x_j$, so no agent can
EF1- or EFX-envy them, since after removing $x_j$, her bundle is empty. Repeating this replacement for all goods assigned to unmarked agents yields a feasible solution for the reduced instance.

Conversely, suppose that the reduced instance admits a solution $\pi$ reachable
from $\sigma_{|A'}$ by at most $k$ reallocations. We show that the same allocation is EF1, respectively EFX, in the original instance. Assume towards a
contradiction that this is not the case. Then there is an unmarked agent $a_u$
who EF1-envies, respectively EFX-envies, the bundle assigned to some marked
agent $a_v$.
Since $a_u$ receives no good in $\pi$, we have $v_u(\pi_u)=0$. We distinguish
two cases.

First consider EF1. Since $a_u$ EF1-envies $a_v$, for every good
$g\in\pi_v$ we have $v_u(\pi_v\setminus\{g\})>0$. Hence $\pi_v$ contains
two goods $x_i,x_j$ with $v_u(x_i)>0$ and $v_u(x_j)>0$. By construction,
for the pair $x_i,x_j$, we marked $m+1$ agents who value both goods
positively, unless there were at most $m$ such agents in total. Since
$a_u$ is unmarked and values both goods positively, the latter case is
impossible. Thus one of these $m+1$ marked agents receives no good in
$\pi$. This empty marked agent also EF1-envies $a_v$, because after
removing any one good from $\pi_v$, at least one of $x_i,x_j$ remains.

Now consider EFX. Since $a_u$ EFX-envies $a_v$, there exists a good
$y\in\pi_v$ such that $v_u(\pi_v\setminus\{y\})>0$. Hence some good
$x_i\in\pi_v\setminus\{y\}$ satisfies $v_u(x_i)>0$. By construction, for
$x_i$ we marked $m+1$ agents with highest valuation for $x_i$; since
$a_u$ is unmarked and values $x_i$ positively, all these marked agents
also value $x_i$ positively. One of them is empty in $\pi$, and after
removing the same good $y$, it still positively values the remaining
bundle. Hence it EFX-envies $a_v$.

In either case, a marked empty agent violates EF1, respectively EFX,
contradicting feasibility of $\pi$ in the reduced instance.
Therefore the reduced instance is equivalent to the original instance.
\qedhere
\end{proof}

Combining the above reduction with exhaustive search yields the following fixed-parameter tractability result.

\begin{theorem}\label{cor:fptm}
\realloc{} is fixed-parameter tractable with respect to $m$ and XP with respect to $k$ for every $\beta\in\{\mathrm{EF},\mathrm{EF1},\mathrm{EFX}\}$.
\end{theorem}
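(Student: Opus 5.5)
The plan is to combine an agent-reduction step with exhaustive search over allocations for the parameter $m$, and to enumerate all changed goods directly for the parameter $k$.

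\emph{FPT in $m$ for EF1 and EFX.} First I apply \cref{lem:red-ef1-efx}. In polynomial time it yields an equivalent instance with $|A'| = \mathcal{O}(m^3)$ agents and the same goods and budget. On this instance I enumerate every allocation $\pi$ of the $m$ goods to the $|A'|$ agents; there are $|A'|^m = m^{\mathcal{O}(m)}$ of them. For each $\pi$ I check in polynomial time whether $d(\sigma,\pi)\le k$ and whether $\pi$ is EF1 (respectively EFX), by testing every ordered pair of agents and every good. The total running time is $m^{\mathcal{O}(m)}\cdot\mathrm{poly}(n,m)$, which is FPT.

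\emph{FPT in $m$ for EF.} \cref{lem:red-ef1-efx} does not cover EF, and the reduction of \cref{lem:ef-efx} adds $n(k+1)$ goods, which destroys the parameter. I therefore plan a separate agent reduction, which I expect to be the main obstacle.

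The key observation is as follows. In an envy-free allocation $\pi$, any agent $a$ with $\pi_a=\emptyset$ must have $v_a(\pi_b)=0$ for all $b$. Since every good is allocated, $a$ must then value every good at zero. Hence every agent with some positive valuation must receive at least one good. If more than $m$ agents have a positive valuation, I reject immediately.

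Otherwise, let $P$ be the set of agents with a positive valuation, so $|P|\le m$. Let $Z$ be the set of all-zero agents. All-zero agents never envy anyone, and all-zero agents that hold nothing in $\sigma$ are interchangeable.

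I keep $P$, all initial holders in $Z$ (at most $m$ of them), and $m$ further empty agents from $Z$ (or all of them if fewer). I delete the remaining empty all-zero agents, giving at most $3m$ agents.

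\begin{itemize}
\item Soundness of deletion: a deleted agent is empty in $\sigma$ and never envies. If a solution gives goods to deleted agents, those goods can be moved to kept empty agents of $Z$ without changing the reallocation count or anyone's envy. The kept agents are interchangeable with the deleted ones, and at most $m$ agents receive goods, so enough kept empty agents exist.
\item Conversely, a solution of the reduced instance stays EF after re-adding deleted agents, since they hold nothing and envy nobody.
\end{itemize}

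I then enumerate all $(3m)^m$ allocations as before.

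\emph{XP in $k$.} Any solution is determined by the set $R$ of at most $k$ goods whose owner changes, together with the new owner of each good in $R$. I enumerate all such choices, of which there are at most $\sum_{i\le k}\binom{m}{i}n^i\le (nm+1)^{k}$. For each choice I build the allocation and check $\beta$ in polynomial time. This gives running time $(nm)^{\mathcal{O}(k)}$ for all three notions, which is XP in $k$.
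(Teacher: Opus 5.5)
Your proposal is correct and follows the paper's route. For EF1/EFX you apply \cref{lem:red-ef1-efx} and then brute-force: you enumerate all $|A'|^m$ allocations, whereas the paper enumerates the $(nm)^k$ reallocation patterns with $k\le m$, which is equivalent. For XP in $k$ you use the same $(nm+1)^k$ enumeration. For EF you use the same observation that every agent with a positive valuation must receive a good, so more than $m$ such agents means a No-instance.

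Your EF reduction is in fact more careful than the paper's. The paper simply discards all agents who value every good at zero, which is incomplete as written for two reasons. First, such an agent may own goods in $\sigma$, so deleting it forces extra reallocations. Second, such agents may be needed as recipients, and the remaining ones are unbounded in $m$. You instead keep the zero-valued initial holders plus $m$ spare empty zero-valued agents. You then justify deleting the rest by permuting bundles among empty all-zero agents, which are interchangeable.
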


\begin{proof}
For $\beta\in\{\mathrm{EF1},\mathrm{EFX}\}$, apply
\cref{lem:red-ef1-efx}. The resulting equivalent instance has
$\mathcal{O}(m^3)$ agents. 
We can now enumerate all possible reallocations of at most $k$ goods. For
each reallocated good, there are $m$ choices for the good and $n$
choices for the receiving agent.  Hence, the total number of
possibilities is at most $(m n)^k$. For each such guess, we can verify feasibility in polynomial time. Thus \realloc{} can be solved in time $\mathcal{O}((nm)^k)$.
Since $k\leq m$ and, after the reduction, $n=\mathcal{O}(m^3)$, this running time is bounded by a function of $m$ times a polynomial in the input size.

For EF, observe that if more than one agent who values some good positively
receives no good, then no EF allocation exists. Thus, after discarding agents
who value all goods at zero, either the number of relevant agents is already at
most $m$, or the instance is a \textsc{No}-instance. Hence \reallocEF{} is also fixed-parameter tractable with respect to $m$.
\qedhere
\end{proof}

\begin{theorem}\label{thm:fpt-nvmax}
For $\beta \in \{EF, EFX\}$, \realloc{} is solvable in $\bigl(2n(v_{\max}+1)^n\bigr)^{O\!\left(n(v_{\max}+1)^n\right)}
\cdot \mathrm{poly}(m,\,n,\,\log v_{\max})$ time.
\end{theorem}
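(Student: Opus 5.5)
We group goods into types: the type of a good $g$ is its valuation vector $(v_1(g),\dots,v_n(g))\in\{0,\dots,v_{\max}\}^n$, so there are at most $T\coloneqq(v_{\max}+1)^n$ types. Goods of the same type are interchangeable for fairness, so an allocation is described, up to the fairness test, by the numbers $y_{i,\tau}$ of goods of type $\tau$ that agent $a_i$ receives. The initial allocation likewise gives numbers $c_{i,\tau}$. The plan is to write an ILP whose only integer variables are these $y_{i,\tau}$, that is $nT$ variables, and to solve it with Lenstra's algorithm. The stated running time has the shape $p^{O(p)}$ with $p$ of order $n(v_{\max}+1)^n$, up to a factor $2n$ coming from auxiliary variables. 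This is consistent with that plan.

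\textbf{Constraints.} Conservation is $\sum_i y_{i,\tau}=|\{g:\text{type } \tau\}|$ and $y_{i,\tau}\ge 0$. For the reallocation cost, the minimum number of moved goods realizing counts $y$ from counts $c$ is $\sum_{i,\tau}\max\{0,\,y_{i,\tau}-c_{i,\tau}\}$. Each agent keeps $\min\{y_{i,\tau},c_{i,\tau}\}$ goods of type $\tau$, and the surplus of each type can be routed to the deficits, since totals agree. We linearize this cost with auxiliary variables $w_{i,\tau}\ge y_{i,\tau}-c_{i,\tau}$, $w_{i,\tau}\ge 0$, and require $\sum w_{i,\tau}\le k$. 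This brings the variable count to $2nT$.

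\textbf{EF.} The constraint is linear: for all $i,j$, $\sum_\tau v_i(\tau)\,y_{i,\tau}\ge \sum_\tau v_i(\tau)\,y_{j,\tau}$.

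\textbf{EFX.} This is where the obstacle lies, since the removed good is the minimum-value good in $a_j$'s bundle, a nonlinear quantity. My first approach is to guess, for every ordered pair $(i,j)$, which type (or none) realizes $a_i$'s least valued good in $\sigma'_j$. Instead of per pair, it is cleaner to guess for each agent $j$ the support $S_j\subseteq$ types of its bundle, giving $2^{nT}$ guesses in total. This is absorbed in $(2nT)^{O(nT)}$. Given $S_j$, we add $y_{j,\tau}\ge1$ for $\tau\in S_j$ and $y_{j,\tau}=0$ otherwise. For each $i$ we let $\mu_{ij}=\min_{\tau\in S_j}v_i(\tau)$, a constant, and impose $\sum_\tau v_i(\tau)y_{i,\tau}\ge \sum_\tau v_i(\tau)y_{j,\tau}-\mu_{ij}$ when $S_j\neq\emptyset$. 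The EFX condition for every $g\in\sigma'_j$ is exactly this with the minimum-value good removed, so it is correct.

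\textbf{Correctness and running time.} Correctness is in both directions: any solution maps to counts satisfying some guess, and any feasible ILP point yields an allocation with distance $\le k$ via the routing argument. The only subtle step is justifying that the distance equals $\sum\max\{0,y-c\}$ exactly. A good moves iff it is not kept, and keeping $\min\{y_{i,\tau},c_{i,\tau}\}$ per agent and type is optimal, so this holds. Lenstra's algorithm then gives $(2nT)^{O(nT)}\cdot\mathrm{poly}(m,n,\log v_{\max})$ per guess, and multiplying by $2^{nT}$ guesses stays within the claimed bound. Alternatively, \cref{lem:ef-efx} does not directly help here, since it preserves $n$ and $v_{\max}$ but enlarges the good set, which is harmless, so EF could also be handled by the EFX algorithm.
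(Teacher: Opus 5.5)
Your proposal is correct and follows essentially the paper's proof: it groups goods into at most $(v_{\max}+1)^n$ types, uses integer type-count variables with conservation constraints, linearizes the reallocation cost with auxiliary variables under a budget constraint, imposes EF linearly, and invokes Lenstra. Your deficit form $w_{i,\tau}\ge y_{i,\tau}-c_{i,\tau}$ is equivalent to the paper's surplus form $s_{t,i}\ge c_{t,i}-c'_{t,i}$ because per-type totals agree, and the only other deviation is cosmetic: for EFX you enumerate the $2^{nT}$ bundle supports outside the ILP and use the constant $\mu_{ij}$, whereas the paper makes the presence indicators $b_{t,j}$ binary ILP variables and relaxes each per-type EFX inequality with a big-$M$ term, which amounts to the same thing and gives the same bound.
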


\begin{proof}
We say that two goods $g,g'$ have the same type if $v_i(g) = v_i(g')$ for all $a_i \in A$. Since goods of the same type are interchangeable for every valuation-based fairness criterion, we may represent any allocation $\sigma$ by its \emph{type-count vector}
$\mathbf{c} = (c_{t,i})_{t\in[\tau],a_i\in A}$, where $\tau$ is the number of good types and $c_{t,i} \in \mathbb{Z}_{\geq 0}$ denotes the number of type-$t$ goods held by agent $a_i$.

Let $T := (v_{\max}+1)^n$ be an upper bound on the number of good types. 
For a type $t$ and an agent $a_i$, let $c_{t,i}$ denote the number of goods
of type $t$ initially allocated to $a_i$, and let $C_t := \sum_{a_i\in A} c_{t,i}$ be the total number of goods of type $t$.

We first present an ILP formulation for the EFX case.
For every $t\in[T]$ and $a_i\in A$, the formulation uses the following variables:
$c'_{t,i}\in \mathbb{Z}_{\geq 0}$,
$b_{t,i}\in\{0,1\}$ and 
$s_{t,i}\in\mathbb{R}_{\geq 0}$.
Here $c'_{t,i}$ is the number of type-$t$ goods assigned to agent $a_i$ in the
new allocation, $b_{t,i}$ indicates whether agent $a_i$ receives at least one
type-$t$ good, and $s_{t,i}$ is an auxiliary surplus variable used to linearise the term $\max\{c_{t,i}-c'_{t,i},0\}$, the number of type-$t$ goods removed from agent $a_i$.
Let $M := m \cdot v_{\max}$. We use the following constraints:
\begin{alignat}{3}
    &\text{Find} \quad && c'_{t,i} \in \mathbb{Z}_{\geq 0},\;
      b_{t,i} \in \{0,1\},\; s_{t,i} \in \mathbb{R}_{\geq 0}
    \quad &\forall\, t \in [T],\, a_i \in A, \notag\\
    &\text{s.t.} && \sum_{a_i \in A} c'_{t,i} = C_t \quad &\forall\, t\in [T], \tag{conservation}\\
    &&& c'_{t,i} \leq C_t\, b_{t,i}, \quad c'_{t,i} \geq b_{t,i}
      \quad &\forall\, t\in [T],a_i \in A, \tag{presence}\\
    &&& s_{t,i} \geq c_{t,i} - c'_{t,i} \quad &\forall\, t\in [T],a_i \in A,
      \tag{surplus}\\
    &&& \sum_{t,a} s_{t,i} \leq k,
      \tag{budget}\\
    &&& v_i(\sigma'_i) \geq v_i(\sigma'_j) - v_i(g_t) - M(1-b_{t,j})
        \quad &\forall\,t\in [T], a_i,a_j \in A, \tag{EFX}
\end{alignat}
where $C_t = \sum_{a_i \in A} c_{t,i}$,
$v_i(\sigma'_i) = \sum_{t\in [T]} c'_{t,a}\, v_i(g_t)$, and $g_t$ denotes any fixed representative good of type $t$.

The \emph{conservation} constraints preserve the total number of goods of each type. The \emph{presence} constraints encode the indicators $b_{t,i}$: together, they imply
$b_{t,i}=1 \Leftrightarrow c'_{t,i}\geq 1$. Indeed, if $b_{t,i}=0$, then $c'_{t,i}\leq 0$, and since
$c'_{t,i}\geq 0$, we get $c'_{t,i}=0$. Conversely, if $c'_{t,i}=0$, then
$c'_{t,i}\geq b_{t,i}$ implies $b_{t,i}=0$.
The \emph{surplus} variable
$s_{t,i}$ linearises the loss in type-$t$ goods for agent $a_i$: because
$s_{t,i} \geq c_{t,i}-c'_{t,i}$ and $s_{t,i} \geq 0$, it satisfies
$s_{t,i} \geq \max(c_{t,i}-c'_{t,i},0)$, so the \emph{budget} constraint
$\sum_{t,i} s_{t,i} \leq k$ is equivalent to bounding the total number of goods removed from their initial agents by $k$.  The \emph{EFX} constraints enforce envy-freeness up to the removal of any good in the envied agent's bundle. If $b_{t,j}=1$, then agent $a_j$ receives at least one good of type $t$, and the constraint becomes $v_i(\sigma'_i)
    \geq
    v_i(\sigma'_j)-v_i(g_t)$.
This is exactly the EFX condition for removing a type-$t$ good from $a_j$'s bundle, as all goods of the same type have the same value for every
agent. If $b_{t,j}=0$, then agent $a_j$ receives no good of type $t$, and
the big-$M$ term relaxes the constraint. Since all good values are
nonnegative and at most $v_{\max}$, we have $v_i(\sigma'_j)\leq m \cdot v_{\max}=M$,
so the relaxed constraint is automatically satisfied. 

\noindent\textbf{Correctness}
We show the ILP is feasible if and only if \reallocEFX{} has a solution.

\smallskip
\noindent$(\Rightarrow)$~
Given a feasible ILP solution $(\mathbf{c}',\mathbf{b},\mathbf{s})$, construct
$\sigma'$ by assigning exactly $c'_{t,i}$ goods of type $t$ to agent $a_i$.
The conservation constraints ensure that this yields a valid partition of all
goods. Because $s_{t,i} \geq c_{t,i}-c'_{t,i}$ and $s_{t,i}\geq 0$, we have
$\max\{c_{t,i}-c'_{t,i},0\} \leq s_{t,i}$
for every $t,i$. Hence $d(\sigma,\sigma')
    =
    \sum_{t,i}\max\{c_{t,i}-c'_{t,i},0\}
    \leq
    \sum_{t,i} s_{t,i}
    \leq k$.
It remains to show that $\sigma'$ is EFX. For any pair $a_i,a_j$ and any
$g\in\sigma'_j$ of type $t$, we have $c'_{t,j}\geq 1$, so $b_{t,j}=1$ by
the presence constraints. The EFX constraint therefore gives $v_i(\sigma'_i)
    \geq
    v_i(\sigma'_j)-v_i(g_t)
    =
    v_i(\sigma'_j)-v_i(g)$,
where the equality follows because $g$ and $g_t$ have the same type. Hence
$a_i$ does not envy $a_j$ after removing $g$, and so $\sigma'$ is EFX.

\smallskip
\noindent$(\Leftarrow)$~
Let $\sigma' = (\sigma'_1,\dots,\sigma'_n)$ be an EFX allocation with
$d(\sigma,\sigma') \leq k$. Set
\[
    c'_{t,i}
    :=
    |\{g \in \sigma'_i \mid \mathrm{type}(g)=t\}|,
    \quad
    b_{t,i}
    :=
    \mathbf{1}[c'_{t,i}\geq 1], \quad
    s_{t,i}
    :=
    \max\{c_{t,i}-c'_{t,i},0\}.
\]
Because $\sigma'$ partitions all goods, $\sum_i c'_{t,i}=C_t$ for every
type $t$, so conservation holds. The definition of $b_{t,i}$ directly
satisfies both presence constraints. By construction,
$s_{t,i}\geq c_{t,i}-c'_{t,i}$ and $s_{t,i}\geq 0$, so the surplus
constraints hold. Moreover,
\[
    \sum_{t,i} s_{t,i}
    =
    \sum_{t,i}\max\{c_{t,i}-c'_{t,i},0\}
    =
    d(\sigma,\sigma')
    \leq k,
\]
so the budget constraint holds.

Finally, fix any $a_i,a_j$ and any type $t$. If $b_{t,j}=0$, the EFX constraint
is relaxed by the big-$M$ term and is automatically satisfied. If $b_{t,j}=1$, then $\sigma'_j$ contains some good $g$ of type $t$. Since $\sigma'$ is EFX,  $v_i(\sigma'_i)
    \geq
    v_i(\sigma'_j)-v_i(g)
    =
    v_i(\sigma'_j)-v_i(g_t)$,
so the EFX constraint holds. Hence the ILP is feasible.

\smallskip
\noindent\textbf{Running time.}
The ILP has $p = 2Tn = 2n(v_{\max}+1)^n$ integer variables, namely the variables $c'_{t,i}$ and $b_{t,i}$. 
The number of constraints and the encoding size are polynomial in
$m,n,\log v_{\max}$, up to the explicit dependence on $T$. By Lenstra~\cite{Lenstra83}, feasibility of an ILP with $p$ integer variables and input encoding size $L$ can be decided in time $p^{O(p)}\cdot\mathrm{poly}(L)$.
This gives an overall running time of $$\bigl(2n(v_{\max}+1)^n\bigr)^{O(n(v_{\max}+1)^n)}\cdot \mathrm{poly}(m,n,\log v_{\max}).$$
Thus \reallocEFX{} is fixed-parameter tractable with respect to $v_{\max}$ and $n$.
It remains to consider the EF case. By \cref{lem:ef-efx}, the above EFX
formulation already implies fixed-parameter tractability for \reallocEF{}.

However, in the EF case we can use a smaller ILP, since no good-removal condition has to be encoded. In particular, the indicator variables $b_{t,i}$ and the presence constraints are unnecessary.

We keep only the variables $c'_{t,i}\in \mathbb{Z}_{\geq 0}$
    and 
    $s_{t,i}\in\mathbb{R}_{\geq 0}$
for every $t\in[T]$ and $a_i\in A$, together with the \emph{conservation}, \emph{surplus},
and \emph{budget} constraints. We replace the EFX constraints by the following EF constraints:
\begin{alignat}{3}
    &\sum_{t\in[T]} c'_{t,i}v_i(g_t)
        \geq
        \sum_{t\in[T]} c'_{t,j}v_i(g_t)
        \quad &&\forall\, a_i,a_j\in A.
        \tag{EF}
\end{alignat}
Here, as before, $g_t$ denotes an arbitrary representative good of type $t$.

The conservation constraints ensure that the resulting type-count vector
corresponds to a valid allocation of all goods. The surplus and budget
constraints enforce $d(\sigma,\sigma')
    =
    \sum_{t,i}\max\{c_{t,i}-c'_{t,i},0\}
    \leq k$,
exactly as in the EFX case. Finally, the constraints (EF) state
precisely that no agent envies any other agent under the allocation induced by
$\mathbf{c}'$. Hence every feasible ILP solution induces an EF allocation
within distance at most $k$ from $\sigma$, and conversely every such allocation
yields a feasible ILP solution by setting
$s_{t,i}=\max\{c_{t,i}-c'_{t,i},0\}$.

This ILP has only $p=Tn=n(v_{\max}+1)^n$ integer variables, namely the variables $c'_{t,i}$. Its encoding size is
polynomial in $m,n,\log v_{\max}$, up to the explicit dependence on $T$.
By Lenstra~\cite{Lenstra83}, feasibility can therefore be decided
in time
\[
    \bigl(n(v_{\max}+1)^n\bigr)^{O(n(v_{\max}+1)^n)}
    \cdot
    \mathrm{poly}(m,n,\log v_{\max}).
\] \end{proof}

We use a similar ILP to prove the following. 
\begin{theorem}\label{thm:fpt-ef1-nvmax}
\reallocEFone{} is solvable in $\bigl(n(n+2)(v_{\max}+1)^n\bigr)^{O(n^2(v_{\max}+1)^n)}\cdot
\mathrm{poly}(m,\,n,\,\log v_{\max})$ time.
\end{theorem}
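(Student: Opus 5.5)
We reuse the type-count ILP from the proof of \cref{thm:fpt-nvmax}: goods are grouped into at most $T=(v_{\max}+1)^n$ types, and we keep the variables $c'_{t,i}\in\mathbb{Z}_{\ge 0}$ and $s_{t,i}\ge 0$ with the \emph{conservation}, \emph{surplus} and \emph{budget} constraints unchanged. The only new ingredient is the EF1 condition. Unlike EFX, EF1 is existential in the removed good, so we encode that choice explicitly.

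For every ordered pair $(a_i,a_j)$ and every type $t$, introduce a binary selector $y_{i,j,t}\in\{0,1\}$. It indicates that agent $a_i$ discards a type-$t$ good from $a_j$'s bundle. We add the following constraints:
\[
\sum_{t\in[T]} y_{i,j,t}\le 1,\qquad y_{i,j,t}\le c'_{t,j},
\]
so that a selected type must actually be present in $\sigma'_j$. We also add
\[
\sum_{t} c'_{t,i}v_i(g_t)\ \ge\ \sum_t c'_{t,j}v_i(g_t)-\sum_t y_{i,j,t}\,v_i(g_t).
\]
The case $\sigma'_j=\emptyset$ needs care, because then no selector can be set. Allowing all $y_{i,j,t}=0$ (rather than forcing the sum to equal $1$) handles it: the constraint then reduces to plain envy-freeness toward $a_j$, which is trivially true when $\sigma'_j=\emptyset$, and is a stronger requirement otherwise, so soundness is preserved. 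This gives $n^2T$ selectors. Together with $nT$ count variables (and, if we mimic the EFX presence indicators $b_{t,j}$, another $nT$), the number of integer variables is at most $n(n+2)T$. This matches the claimed bound $p=n(n+2)(v_{\max}+1)^n$ in the exponent $O(n^2(v_{\max}+1)^n)$.

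Correctness then follows the template of \cref{thm:fpt-nvmax}.

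For $(\Rightarrow)$, realize $\sigma'$ from the counts $c'$. The budget argument is identical. For each pair, if $y_{i,j,t}=1$ then $c'_{t,j}\ge1$, so some good $g\in\sigma'_j$ of type $t$ exists, and the constraint gives $v_i(\sigma'_i)\ge v_i(\sigma'_j\setminus\{g\})$. If all selectors are $0$, we obtain full non-envy, so EF1 holds in either case.

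For $(\Leftarrow)$, take an EF1 allocation within distance $k$. For each pair with $\sigma'_j\neq\emptyset$, fix a witness good $g$ and set $y_{i,j,\mathrm{type}(g)}=1$. If $\sigma'_j=\emptyset$, set all selectors to $0$. All constraints are then satisfied.

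Finally, apply Lenstra's theorem. The number of integer variables is bounded by $n(n+2)(v_{\max}+1)^n$, and the constraint count and coefficients (at most $m\cdot v_{\max}$) give encoding size $\mathrm{poly}(m,n,\log v_{\max})$ times a factor depending on $T$. This yields the stated running time.

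The main obstacle is the existential choice of the removed good. The selector encoding handles it with linear constraints only, with no product of variables, because the subtracted term $\sum_t y_{i,j,t}v_i(g_t)$ is linear in $y$ and the values $v_i(g_t)$ are constants. The remaining subtle points are the empty-bundle case, handled by the relaxed $\le 1$ constraint above, and making sure the variable count matches the stated exponent.
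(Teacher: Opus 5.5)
Your proposal is correct and follows the paper's proof: the type-count ILP of \cref{thm:fpt-nvmax} is augmented with per-pair binary witness-type selectors and then solved with Lenstra's algorithm. Your version streamlines it slightly. Allowing $\sum_t y_{i,j,t}\le 1$ replaces the paper's dummy type $t=0$, the constraint $y_{i,j,t}\le c'_{t,j}$ replaces the presence indicators $b_{t,j}$, and the single linear constraint subtracting $\sum_t y_{i,j,t}v_i(g_t)$ replaces the paper's $T+1$ big-$M$ constraints per pair, all within the same $n(n+2)(v_{\max}+1)^n$ bound on integer variables.
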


\begin{proof}
We use the same good-type reduction from Theorem~\ref{thm:fpt-nvmax}. 
We just replace the EFX constraints by
constraints encoding EF1 via explicit witness variables.

For each ordered pair $i\neq j$, EF1 requires either that $a_i$ does not envy $a_j$, or that there exists a good $g\in\sigma'_j$ whose removal eliminates the envy: $v_i(\sigma'_i) \geq v_i(\sigma'_j)-v_i(g)$.
Since all goods of the same type have identical values for every agent, it suffices to record the type of such a witness good.
We introduce binary variables $w_{t,i,j}\in\{0,1\}$ for all $t\in\{0,\dots,T\}$,\ $i\neq j$.
For $t\in[T]$, the variable $w_{t,i,j}=1$ means that a type-$t$ good in $\sigma'_j$ is chosen as the EF1 witness for the pair $(i,j)$. The dummy
type $t=0$ represents the case where no good can be removed; we set $v_i(g_0)\coloneqq 0$ for every agent $a_i$.
We add the following constraints:
\begin{alignat}{3}
    && \sum_{t=0}^{T} w_{t,i,j} &= 1
        \quad && \forall\, i\neq j, \tag{witness-select}\\
    && w_{t,i,j} &\leq b_{t,j}
        \quad && \forall\, t\in[T],\ i\neq j, \tag{witness-valid}\\
    && v_i(\sigma'_i) &\geq v_i(\sigma'_j)-v_i(g_t)-M(1-w_{t,i,j})
        \quad && \forall\, t\in\{0,\dots,T\},\ i\neq j. \tag{EF1}
\end{alignat}
The \emph{witness-select} constraints choose exactly one witness type for each ordered pair $(i,j)$. The \emph{witness-valid} constraints ensure that a real witness type $t\in[T]$ can be selected only if agent $a_j$ actually receives a good of type $t$. The dummy type $0$ handles the case in which $a_j$'s bundle is empty, and hence there is no good whose removal can serve
as a witness. The \emph{EF1} constraints enforce the corresponding inequality. If $w_{t,i,j}=1$, the constraint becomes $v_i(\sigma'_i) \geq v_i(\sigma'_j)-v_i(g_t)$.
For $t\in[T]$, this says that removing a type-$t$ good from $a_j$'s bundle eliminates $a_i$'s envy. For $t=0$, since $v_i(g_0)=0$, it says $v_i(\sigma'_i) \geq v_i(\sigma'_j)$,
so $a_i$ does not envy $a_j$ even without removing a good. If
$w_{t,i,j}=0$, the big-$M$ term relaxes the constraint. Since
$v_i(\sigma'_j)\leq m\cdot v_{\max}=M$ and valuations are nonnegative, the relaxed constraint is automatically satisfied.

\medskip
\noindent\textbf{Correctness.}
We show the extended ILP is feasible if and only if \reallocEFone{} has a solution.

\smallskip
\noindent$(\Rightarrow)$~
Given a feasible ILP solution, construct $\sigma'$ from $\mathbf{c}'$ as in
Theorem~\ref{thm:fpt-nvmax}; the bound $d(\sigma,\sigma')\leq k$ follows by
the same surplus-budget argument. For each pair $i\neq j$,
\emph{witness-select} guarantees a unique witness type
$t^*\in\{0,\dots,T\}$ with $w_{t^*,i,j}=1$. If $t^*=0$, then the \emph{EF1} constraint gives $v_i(\sigma'_i)\geq v_i(\sigma'_j)$, so $a_i$ does not envy $a_j$. If $t^*\in[T]$, then \emph{witness-valid}
ensures $b_{t^*,j}=1$, meaning $\sigma'_j$ contains at least one good $g$
of type $t^*$. The \emph{EF1} constraint gives $v_i(\sigma'_i)
    \geq
    v_i(\sigma'_j)-v_i(g_{t^*})
    =
    v_i(\sigma'_j)-v_i(g)$, confirming that $a_i$ does not EF1-envy $a_j$. Since this holds for every
ordered pair, $\sigma'$ is EF1.

\smallskip
\noindent$(\Leftarrow)$~
Let $\sigma'$ be an EF1 allocation with $d(\sigma,\sigma')\leq k$.
Set $c'_{t,i}$, $b_{t,i}$, and $s_{t,i}$ as in
\cref{thm:fpt-nvmax}; by the same argument those variables satisfy
conservation, presence, surplus, and budget.
For each ordered pair $i\neq j$, if $\sigma'_j=\emptyset$, set
$w_{0,i,j}:=1$ and $w_{t,i,j}:=0$ for all $t\in[T]$. Otherwise,
$\sigma'_j$ contains at least one good. By EF1, there exists a witness good
$g^*\in\sigma'_j$ such that $v_i(\sigma'_i)\geq v_i(\sigma'_j)-v_i(g^*)$.
Let $t^*$ be the type of $g^*$, set $w_{t^*,i,j}:=1$, and set
$w_{t,i,j}:=0$ for all $t\in\{0,\dots,T\}\setminus\{t^*\}$. This satisfies
\emph{witness-select}. It also satisfies \emph{witness-valid}, since
$g^*\in\sigma'_j$ implies $c'_{t^*,j}\geq 1$, hence $b_{t^*,j}=1$.
Finally, the \emph{EF1} constraint holds because $v_i(\sigma'_i)
    \geq
    v_i(\sigma'_j)-v_i(g^*)
    =
    v_i(\sigma'_j)-v_i(g_{t^*})$.
All other \emph{EF1} constraints are relaxed by the big-$M$ term. Hence
the extended ILP is feasible.

\smallskip \noindent\textbf{Running time.} The ILP has $p = 2Tn + (T+1)n^2 \leq (T+1)n(n+2) = ((v_{\max}+1)^n+1)n(n+2)$ integer variables in total, a quantity depending only on $n$ and $v_{\max}$. By Lenstra~\cite{Lenstra83}, feasibility of an ILP with $p$ integer variables and input encoding size $L$ can be decided in time $p^{O(p)}\cdot\mathrm{poly}(L)$. The input size is $\mathrm{poly}(m,n,\log v_{\max})$, yielding overall running time $$\bigl(((v_{\max}+1)^n+1)n(n+2)\bigr)^{O(n^2(v_{\max}+1)^n)}\cdot \mathrm{poly}(m,n,\log v_{\max}).$$ 
\end{proof}

\cref{thm:fpt-nvmax} is FPT only in $n+v_{\max}$: the budget~$k$ does not enter the exponent, so it is preferable when $k$ is large or unbounded. Its cost is doubly exponential in $n$, with base $(v_{\max}+1)^n$. \cref{thm:fpt-nvmaxk} instead parameterizes by~$k$, giving running time linear in~$m$ and singly exponential in~$n^2$, unlike \cref{thm:fpt-nvmax,thm:fpt-ef1-nvmax}. Since $(2kv_{\max})^{n^2}\ll (2n(v_{\max}+1)^n)^{n(v_{\max}+1)^n}$ for $k\ll (v_{\max}+1)^n$, the DP is much faster for small reallocation budgets.


\begin{theorem}\label{thm:fpt-nvmaxk}
   For $\beta \in \{EF,EF1,EFX\}$ \realloc{} is solvable in
    $\mathcal{O}\bigl(m \cdot n^3 \cdot (k+1) \cdot (2kv_{\max}+1)^{n^2} \cdot
    (v_{\max}+1)^{n^2}\bigr)$ time.
\end{theorem}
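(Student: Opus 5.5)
The plan is a dynamic program that processes the goods $g_1,\dots,g_m$ one at a time and decides, for each good, whether it stays with its initial owner $\sigma(g)$ or moves to another agent. The point is to avoid storing absolute bundle values, which can be as large as $m v_{\max}$. Instead, the state stores only the \emph{change} in each bundle's value relative to the initial allocation, plus a small amount of information about the extremal goods in each final bundle.

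\textbf{State.} After processing the first $\ell$ goods, a state is a tuple $(\ell,r,D,E)$ where:
\begin{itemize}
\item $r\in\{0,\dots,k\}$ is the number of reallocations used so far.
\item $D=(D_{ij})_{a_i,a_j\in A}$ with $D_{ij} = v_i(\sigma'_j\cap\{g_1,\dots,g_\ell\}) - v_i(\sigma_j\cap\{g_1,\dots,g_\ell\})$.
\item $E=(E_{ij})_{a_i,a_j\in A}$ with $E_{ij}\in\{0,\dots,v_{\max}\}$, recording an extremal value of $v_i$ over the goods already placed in $\sigma'_j$:
  \begin{itemize}
  \item for EF1, the maximum $\max_{g\in\sigma'_j} v_i(g)$, initialised to $0$;
  \item for EFX, the minimum $\min_{g\in\sigma'_j} v_i(g)$, initialised to $v_{\max}$ as a sentinel for an empty bundle;
  \item for EF, $E$ is not needed.
  \end{itemize}
\end{itemize}

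\textbf{Bounding the state space.} Each reallocated good changes each $v_i(\sigma'_j)$ by at most $v_{\max}$, and at most $k$ goods move. Hence $|D_{ij}|\le k v_{\max}$ in every reachable state, so $D$ ranges over at most $(2kv_{\max}+1)^{n^2}$ values. Likewise $E$ ranges over at most $(v_{\max}+1)^{n^2}$ values. Together with $r$, this gives $(k+1)(2kv_{\max}+1)^{n^2}(v_{\max}+1)^{n^2}$ states per layer $\ell$.

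\textbf{Transitions.} From a state in layer $\ell-1$ we try all $n$ possible new owners $a_h$ of $g_\ell$.
\begin{itemize}
\item If $a_h\neq\sigma(g_\ell)$, we increase $r$ by one and discard the transition if $r>k$.
\item For every $a_i$ we add $v_i(g_\ell)$ to $D_{ih}$ and subtract $v_i(g_\ell)$ from $D_{i\,\sigma(g_\ell)}$; these two updates cancel when $a_h=\sigma(g_\ell)$.
\item For every $a_i$ we update $E_{ih}$ by taking the max (EF1) or min (EFX) with $v_i(g_\ell)$.
\end{itemize}
Each transition touches $O(n)$ entries. Counting $n$ choices of $a_h$ and a conservative $O(n^2)$ for copying the state tuple, each state costs $O(n^3)$. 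Over $m$ layers this gives exactly the claimed running time.

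\textbf{Acceptance check.} The initial values $V_{ij}:=v_i(\sigma_j)$ are precomputed once in $O(nm)$ time. A final state in layer $m$ represents an allocation with $v_i(\sigma'_j)=V_{ij}+D_{ij}$. We accept if for all $a_i,a_j$:
\begin{itemize}
\item EF: $V_{ii}+D_{ii}\ge V_{ij}+D_{ij}$;
\item EF1: $V_{ii}+D_{ii}\ge V_{ij}+D_{ij}-E_{ij}$;
\item EFX: $V_{ii}+D_{ii}\ge V_{ij}+D_{ij}-E_{ij}$.
\end{itemize}

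\textbf{Correctness.} The DP enumerates, up to state equivalence, exactly the allocations within distance $k$ of $\sigma$. Since the acceptance conditions depend only on $(D,E)$, merging allocations that share a state is sound.

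The step that needs care is that the extremal-good summary $E$ is sufficient, and that empty bundles are handled correctly.
\begin{itemize}
\item For EF1, $\max_{g} v_i(\sigma'_j\setminus\{g\}) $-style reasoning reduces to removing the good of largest $v_i$-value, so the maximum $E_{ij}$ is the right summary. If $\sigma'_j$ is empty, then $v_i(\sigma'_j)=0$ and $E_{ij}=0$, so the check reduces to $v_i(\sigma'_i)\ge 0$, which always holds.
\item For EFX, the binding good is the one of smallest $v_i$-value, so the minimum $E_{ij}$ is the right summary. If $\sigma'_j$ is empty, the sentinel $E_{ij}=v_{\max}$ makes the check $v_i(\sigma'_i)\ge -v_{\max}$, which is trivially true, as required.
\end{itemize}
It also must be verified that the $D$-box bound is never violated on a reachable path. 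This holds because $D_{ij}$ only changes when some good actually moves, and each move is charged to $r$.
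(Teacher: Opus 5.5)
Your proposal is correct and is essentially the paper's proof: the same good-by-good dynamic program whose state records the number of reallocations, the $n\times n$ matrix of value changes (bounded by $kv_{\max}$), and a max-good matrix (EF1) or min-good matrix (EFX), with the same acceptance inequalities and the same $O(n^3)$-per-state accounting. Two details are slightly tidier than in the paper: you state the EF check explicitly (dropping $E$), which the paper leaves implicit, and your EFX sentinel $v_{\max}$ keeps $E$ within $(v_{\max}+1)^{n^2}$ values, whereas the paper's acceptance step refers to a sentinel of $v_{\max}+1$.
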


\begin{proof}
We use a dynamic program that processes the goods $g_1,\dots,g_m$ one by one. For a partial assignment of $\{g_1,\dots,g_j\}$, we record:
\begin{itemize}
    \item $\ell \in \{0,\dots,k\}$: the number of reallocations used so far.
    \item the \emph{delta valuation matrix} $\mathbf{d} = (d_{i,i'})_{i,i'\in[n]}$, where
    $d_{i,i'} := v_{i'}(\sigma'_i) - v_{i'}(\sigma_i)$ is the change in the value
    of $a_i$'s bundle as assessed by $a_{i'}$, caused by goods processed so far.
    Since at most $k$ goods contribute to this change and each has value at most $v_{\max}$,
    we have $d_{i,i'} \in [-k v_{\max},\, k v_{\max}]$.
    \item the \emph{max-good matrix} (or resp. min-good) $\mathbf{r} = (r_{i,i'})_{i,i'\in[n]}$, where
    $r_{i,i'} := \max_{g \in \sigma'_i} v_{i'}(g) \in [0, v_{\max}]$ is the highest value, according to agent $a_{i'}$, of any good in agent $a_i$'s current bundle.
    For EFX, we analogously maintain a min-good matrix, initializing every entry to $v_{\max}$ and replacing $\max$ by $\min$ in the updates.
\end{itemize}
The DP table $T[j, \ell, \mathbf{d}, \mathbf{r}] \in \{0,1\}$ records whether the above state is reachable for the first $j$ goods using exactly $\ell$ reallocations.

\smallskip
\noindent There are $m$ layers, $k+1$ values for $\ell$,
$(2kv_{\max}+1)^{n^2}$ choices for $\mathbf{d}$,
and $(v_{\max}+1)^{n^2}$ choices for $\mathbf{r}$.
The total number of states is therefore
\[
    m \cdot (k+1) \cdot (2kv_{\max}+1)^{n^2} \cdot (v_{\max}+1)^{n^2}
    \;=\; m \cdot f(n, v_{\max}, k),
\]
which is \emph{linear in $m$} for fixed $n, v_{\max}, k$.

\smallskip
\noindent Let $a_s$ be the initial owner of good $g_{j+1}$. For each possible recipient $a_u$, define the updated max-good matrix $\mathbf{r}^{(u)}$ by
\[
    r^{(u)}_{a,i'} :=
    \begin{cases}
        \max\{r_{a,i'}, v_{i'}(g_{j+1})\} & \text{if } a=a_u,\\
        r_{a,i'} & \text{otherwise,}
    \end{cases}
    \qquad \forall\, a,i'\in[n].
\]
The recurrence has two cases. In the first case, $g_{j+1}$ stays with its initial agent $a_s$. Then we set \[T[j{+}1,\; \ell,\; \mathbf{d},\; \mathbf{r}^{(s)}] \;\mathrel{\vee}=\; T[j,\; \ell,\; \mathbf{d},\; \mathbf{r}].\]

In the second case, $g_{j+1}$ is reallocated to some agent $a_t$ with $t\neq s$ and $\ell<k$. Define the updated delta matrix $\mathbf{d}^{(s\to t)}$ by
\[
    d^{(s\to t)}_{a,i'} :=
    \begin{cases}
        d_{a,i'} - v_{i'}(g_{j+1}) & \text{if } a=a_s,\\
        d_{a,i'} + v_{i'}(g_{j+1}) & \text{if } a=a_t,\\
        d_{a,i'} & \text{otherwise.}
    \end{cases}
\]
If $\bigl|d^{(s\to t)}_{a,i'}\bigr|\le k\cdot v_{\max}$ for all $a,i'$, we set $ T[j+1,\ell+1,\mathbf{d}^{(s\to t)},\mathbf{r}^{(t)}]
    \;\mathrel{\vee}=\;
    T[j,\ell,\mathbf{d},\mathbf{r}]$.
Otherwise, the transition is skipped.

Each transition updates $n^2$ entries of $\mathbf{d}$ and $\mathbf{r}$. Since there are at most $n$ choices for the recipient in the second case, processing one DP cell takes $\mathcal{O}(n^3)$ time. 

\medskip
\noindent We prove by induction on $j$ that $T[j, \ell, \mathbf{d}, \mathbf{r}] = 1$ if and only if
there exists an assignment of $\{g_1, \dots, g_j\}$ to agents that uses exactly $\ell$
reallocations, 
induces delta matrix $\mathbf{d}$, and induces max-good matrix $\mathbf{r}$.

\smallskip
\noindent\textit{Base case} ($j = 0$).
No good has been processed, so no reallocation has been used and both matrices are zero. Hence we initialise $T[0, 0, \mathbf{0}, \mathbf{0}] = 1$ and set all other entries to $0$.

\smallskip
\noindent\textit{Inductive step} ($j \to j+1$).
Assume the claim holds after processing $g_1,\dots,g_j$. Consider any assignment of $\{g_1,\dots,g_{j+1}\}$ using $\ell'$ reallocations and inducing state $(\mathbf{d}', \mathbf{r}')$.
Let $(\mathbf{d}, \mathbf{r})$ be the state induced by the restriction to $\{g_1,\dots,g_j\}$,
using $\ell$ reallocations.
By the induction hypothesis, $T[j, \ell, \mathbf{d}, \mathbf{r}] = 1$.

\begin{itemize}
    \item If $g_{j+1}$ is \textbf{kept} by its initial agent $a_s$,
    then $\ell' = \ell$, $\mathbf{d}' = \mathbf{d}$, and $r'_{s,i'} = \max(r_{s,i'}, v_{i'}(g_{j+1}))$
    with all other $r'$ entries unchanged.
    The transition rule applies exactly this update, so
    $T[j+1, \ell', \mathbf{d}', \mathbf{r}'] \leftarrow 1$.

    \item If $g_{j+1}$ is \textbf{reallocated} to $a_t$ ($t \neq s$),
    then $\ell' = \ell + 1$, $d'_{s,i'} = d_{s,i'} - v_{i'}(g_{j+1})$,
    $d'_{t,i'} = d_{t,i'} + v_{i'}(g_{j+1})$, and $\mathbf{r}'$ updated accordingly.
    The transition rule for the reallocation performs exactly this update,
    so $T[j+1, \ell', \mathbf{d}', \mathbf{r}'] \leftarrow 1$.
\end{itemize}
Conversely, every entry set to $1$ at layer $j+1$ is obtained by applying one of the two transitions to an entry set to $1$ at layer $j$. By construction, this corresponds to a valid extension of the witnessed partial assignment. This completes the induction.

\smallskip
\noindent\textit{Acceptance.}
At layer $j = m$, the DP has computed all reachable states.
Since $v_{i'}(\sigma'_i) = v_{i'}(\sigma_i) + d_{i,i'}$ and the initial values $v_{i'}(\sigma_i)$ are
fixed input constants, the EF1 condition for a state $(\ell,\mathbf{d},\mathbf{r})$ is
\[
    \forall\, i,i'\in[n]: \quad
    v_i(\sigma_i)+d_{i,i}
    \;\ge\;
    v_i(\sigma_{i'})+d_{i',i}-r_{i',i}.
\]
This condition is checkable directly from $(\mathbf{d},\mathbf{r})$ and the input. We accept if some state $(\ell,\mathbf{d},\mathbf{r})$ with $\ell\le k$ satisfies it.
For EFX, we instead store the corresponding min-good matrix $\mathbf{s}$, where $ s_{i',i}:=\min_{g\in \sigma'_{i'}} v_i(g)$,
with the convention that $s_{i',i}=v_{\max}+1$ if $\sigma'_{i'}=\emptyset$. The EFX condition is then
\[
    \forall\, i,i'\in[n]: \quad
    v_i(\sigma_i)+d_{i,i}
    \;\ge\;
    v_i(\sigma_{i'})+d_{i',i}-s_{i',i}.
\]
This is checked analogously.

\smallskip
\noindent The total running time is $\mathcal{O}\bigl(m \cdot n^3 \cdot f(n, v_{\max}, k)\bigr)$,
where \[f(n, v_{\max}, k) = (k+1)(2kv_{\max}+1)^{n^2}(v_{\max}+1)^{n^2}.\]
This is linear in $m$ and a computable function of $n + v_{\max} + k$, establishing FPT with respect to $n + v_{\max} + k$.
\end{proof}

\section{Conclusion}
We studied \realloc{} for
$\beta \in \{\text{EF}, \text{EF1}, \text{EFX}\}$, providing a
comprehensive complexity map across valuation classes and parameterizations.

A key takeaway is that, under a reallocation budget, the usual algorithmic separation between EF and its relaxations largely disappears. In most of the settings we study, EF1, EFX, and EF exhibit the same hardness behavior, and our algorithmic approaches extend across the three notions with only minor modifications.

Open directions include extending the study to other fairness notions, such as proportionality and the maximin share, and to mixed manna (goods and chores). Such extensions would yield a more complete picture.
Finally, on the optimisation side, designing approximation algorithms for the minimum reallocation distance---the smallest $k$ for which a $\beta$-fair allocation is reachable---and investigating structurally restricted reallocation models (e.g.\ pairwise exchanges or gift-giving chains) are appealing directions for future work.
\bibliographystyle{splncs04}
\bibliography{main}

\newpage

\end{document}